\documentclass[11pt]{article}

\usepackage{amsfonts,amsmath,amssymb,amsthm}
\usepackage{graphicx} 
\usepackage{float}
\usepackage{caption}
\usepackage{subcaption}
\usepackage{graphicx,comment}
\usepackage[export]{adjustbox}
\usepackage{fullpage}
\usepackage[margin=0.5in]{geometry}
\usepackage{xcolor}
\usepackage{hyperref}
\usepackage[capitalize]{cleveref}
\usepackage{authblk}

\newtheorem{thm}{Theorem}
\newtheorem{defn}{Definition}

\newtheorem{rem}{Remark}
\newtheorem{lemma}{Lemma}
\newtheorem{corollary}{Corollary}

\newenvironment{myproof}{
  \par\medskip\noindent
  \textit{Proof}.
}{
\newline
\rightline{$\qedsymbol$}
}

\def\eps{\varepsilon}

\def\Pr{\mathbb{P}}

\def\AM{\operatorname{AM}}
\def\AR{\operatorname{AR}}

\def\SI{Supplementary Information}

\def\mor{\mathcal{M}}
\def\sen{\mathcal{R}}

\def\eps{\varepsilon}

\def\fp{\rho_r^{\mor}} %
\def\fpn{\rho_{r=1}^{\mor}}

\def\fps{\rho_r^{\sen}} %
\def\fpns{\rho_{r=1}^{\sen}}

\def\Sq{\operatorname{Sq}}

\def\maintitle{Replacers and their evolutionary stability in the Moran process on graphs}

\usepackage{lineno}

\author[1]{Michal Pecho}
\author[2]{Jakub Svoboda}
\author[3]{Lenka Kopfová}
\author[1]{Josef Tkadlec}
\author[3]{Krishnendu Chatterjee}

\affil[1]{Computer Science Institute, Charles University, Prague, Czech Republic}
\affil[2]{Dartmouth College, Hanover NH, USA}
\affil[3]{Institute of Science and Technology, Austria}

\begin{document}

\title{\maintitle}

\date{}
\maketitle

\begin{abstract}
Evolutionary dynamics in finite structured populations are commonly modeled by the Moran Birth-death process.
In each discrete step, a single random individual is selected for reproduction, and then its offspring replaces a random neighbor.
A key quantity is the fixation probability of a single invader attempting to take over a population of residents.
A recent work introduced a new neighborhood-aware phenotype called a replacer.
A replacer never wastes their reproductive turn by always replacing an individual of the other type (if available).
In this work, we study the evolutionary stability of resident replacers who are invaded by mutant replacers. 
We find that residents are strongly protected against such invasions, and we quantify the strength of this effect by showing three types of results.
First, we show that on well-mixed populations of size $N$, the invader fixation probability is exponentially small in $N$, even when the invader has a fixed relative reproductive rate $r>1$, and the same holds for all high-degree graphs.
Second, we study bounded-degree graphs. We prove that on cycles, the fixation probability of an advantageous invader decreases only by a constant factor.
However, we also present graphs with maximum degree 4, where the invader fixation probability is exponentially small in $N$ whenever $r\le1.9$. %
Thus, high degrees are sufficient for evolutionary stability, whereas with low degrees the evolutionary stability depends on specific features of the underlying spatial structure.
Third, we prove general bounds for arbitrary graphs.
Namely, we show that for any graph $G$ the invader fixation probability drops below the natural baseline given by the standard Moran process with oblivious individuals on a well-mixed population, both for $r\approx 1$ and for $r\ge 2$.
Together, our results establish that the evolutionary dynamics of replacers is better characterized by the phrase ``survival of the first'' rather than the classic ``survival of the fittest''.
\end{abstract}

\section*{Introduction}

Evolution in finite populations is shaped by the interaction of mutation and selection.
Mutation generates new types (variants), while natural selection prunes the existing types by favoring individuals with higher fitness.
Evolution is further shaped by the spatial structure of the population and by the inherent randomness~\cite{ewens2004mathematical,durrett1994importance}.
A common stochastic description of the evolutionary dynamics on a spatial structure is the Moran Birth-death process~\cite{moran1958random,lieberman2005evolutionary,broom2022game}.
There are $N$ individuals, spread over $N$ nodes of a given graph (network) $G_N$, one individual per node.
The nodes of the graph represent possible sites, and the edges (links) represent where the offspring can migrate.
An unstructured, well-mixed population is represented by the complete graph $K_N$ with all edges present, but graphs can represent any spatial structures, including island models or regular lattices of any dimension.
In each step of the Moran process, first an individual is selected with probability proportional to its reproductive rate, and then its offspring replaces another individual chosen uniformly at random from among its neighbors in the graph, see~\cref{fig:model}a.

\begin{figure}[!hbt]
    \centering
    \includegraphics[width=0.9\linewidth]{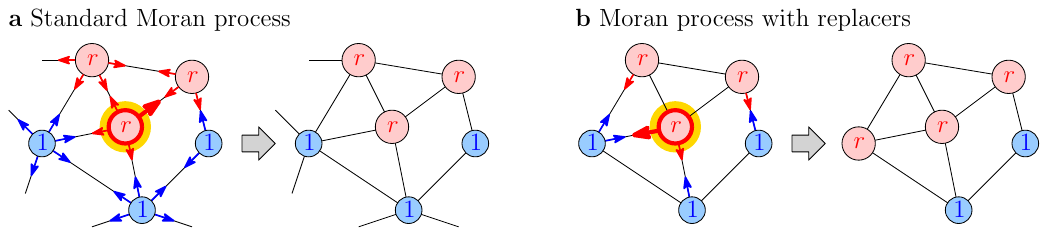}
        
        \caption{
        \textbf{Moran process with oblivious individuals and with replacers.} The population consists of invaders (red, reproductive rate $r$) and residents (blue, reproductive rate $1$) spread over nodes of a graph (network). In each step of the Moran Birth-death process, one individual produces an offspring that replaces a random neighbor.
        \textbf{a,}~In the standard Moran process, individuals are oblivious, so the offspring of a reproducing individual (highlighted) could replace any neighbor (arrows), possibly wasting its reproductive turn (thick red arrow).
        \textbf{b,} In the Moran process with replacers, the offspring replaces a random neighbor only from among those that have a different type, if available.
        }\label{fig:model}
\end{figure}

When mutations are rare, a central quantity is the probability that a single invader with relative reproductive rate $r$ produces a lineage that eventually takes over the entire population of $N-1$ residents, each with reproductive rate $1$~\cite{kimura1968evolutionary,kimura1962probability,desai2007speed}.
We denote this fixation probability by $\fp(G_N)$, where $\mor$ stands for Moran.
When the mutation is neutral ($r=1$), the fixation probability is equal to $1/N$ on any spatial structure $G_N$. %
For non-neutral mutants ($r\ne1$), the situation is more complicated.
On a well-mixed population $K_N$, the fixation probability is given by a formula
\[
    \fp(K_N)=
    \frac{1-\frac1r}{1-\frac1{r^N}}.
\]
That is, for fixed $r>1$ and large $N$ the fixation probability approaches a positive constant $1-1/r$, whereas for fixed $r<1$ and large $N$ it is exponentially small.
Thus, in large populations an advantageous mutant ($r>1$) is somewhat likely to fixate, whereas disadvantageous mutants ($r<1$) are extremely unlikely to fixate.
In fact, the same formula holds for a class of so-called isothermal structures, represented by regular graphs (all nodes have the same number of neighbors).
Thus, the formula serves as a baseline for evaluation of the effects of a spatial structure.

Relative to this baseline, some population structures, known as amplifiers, increase the fixation probability of a mutant in the original process~\cite{adlam2015amplifiers,galanis2017amplifiers,tkadlec2021fast,svoboda2024amplifiers}.
The strongest known amplifiers can make the fixation probability converge to $1$ as the population size grows, for every fixed mutant fitness $r>1$. 
On the other hand, suppressors reduce the fixation probability of an advantageous mutant~\cite{giakkoupis2016amplifiers}, and thus they help maintain evolutionary stability.
By monotonicity~\cite{diaz2016absorption}, the fixation probability of any advantageous mutant ($r>1$) on any spatial structure is at least~$\frac{1}{N}$. See~\cite{diaz2021survey} for a survey.

The individuals in the Moran process are oblivious in the sense that occasionally they waste their reproductive turn by replacing a neighbor of the same type as the parent.
A recent work introduced the notion of a replacer phenotype~\cite{pecho2025selective}.
A replacer is an individual who never wastes their reproductive turn, by replacing a random neighboring individual from among only those that are of a different type than the parent (if such neighbors are available), see~\cref{fig:model}b.
It is speculated that the replacer phenotype occurs in settings as diverse as cancer evolution in Drosophila~\cite{moreno2004dmyc}, or hyphal interference in funghi~\cite{boddy2016fungal}.
The authors of~\cite{pecho2025selective} considered the evolutionary dynamics between replacers and the standard, oblivious individuals.
For example, they found that when a neutral replacer ($r=1$) invades a population of oblivious residents, the fixation probability substantially increases to roughly $1/\sqrt N$, from the baseline value of $1/N$.
However, for advantageous replacers on the well-mixed population, the fixation probability remains unchanged at roughly $1-1/r$.

In this work, we study the evolutionary dynamics between replacers and replacers.
Thus, we consider a single replacer invader, as it attempts to take over a homogeneous background population of replacer residents.
This situation is likely to occur once replacers become established in the population for the first time.
Our main question is to analyze how robust are the established replacers with respect to subsequent replacer invasions.
To that end, we denote by $\fps(G_N)$ the fixation probability of a single replacer invader with relative reproductive rate $r$ (here $\sen$ stands for replacers). %
We present three types of results. Together, they establish that replacers are generally well protected against further replacer invasions.

First, we consider the well-mixed population, represented by the complete graph $K_N$.
We derive an explicit formula for the fixation probability of an invading replacer with any reproductive rate $r$, see~\cref{thm:kn}.
It turns out that the fixation probability is exponentially small in $N$, even when the invader has a fixed advantage $r>1$.
This is in stark contrast with the baseline setting of oblivious individuals, where an invader with an advantage $r>1$ has a constant chance of taking over.
Thus, replacers in large well-mixed populations are extremely robust with respect to subsequent replacer invasions of any strength.
Moreover we show that this effect is present in all graphs where the minimum degree is large.

Second, we study regular bounded-degree graphs such as lattices in 1 or 2 dimensions.
For the 1-dimensional lattice, represented by the cycle graph $C_N$, we again derive an explicit formula, see~\cref{thm:cn_main}.
In particular, the fixation probability of a neutral replacer ($r=1$) becomes $1/(2N-2)$, so roughly a half of the value $1/N$ given by the baseline setting of oblivious individuals.
For advantageous mutants ($r>1$), the drop is by a factor of $1+1/r$.
For 2-dimensional lattice, the drop in the fixation probability is more pronounced.
Moreover, there exist graphs with maximum degree 4, where the fixation probability is exponentially small in $N$ whenever $r\le 1.9$. Thus, even on regular bounded-degree graphs the increase in the stability of established replacers could be extreme.

Third, we prove general bounds for arbitrary graphs.
Namely, we show that for any graph $G$ on $N\ge 3$ nodes, the invader fixation probability drops strictly below the natural baseline value of the setting with all individuals oblivious, both for $r\approx 1$ and for $r\ge2$.
Together, our results establish that the evolutionary dynamics of replacers is better characterized by a phrase ``survival of the first'' rather than the classic ``survival of the fittest''.

\section{Model}

The population size is denoted by $N$.
The spatial structure is represented by a graph $G = (V,E)$, where the set $V$ of $N$ nodes represents the possible sites.
Each site is occupied by a single individual.
The set $E$ of edges represents there the offspring can migrate.
We denote the neighborhood of a node $v$ by $N(v) = \{u\mid \{u,v\} \in E\}$.
The number of neighbors of a node $v$ is called its degree, and is denoted by $d_v$.
The minimal degree of a graph is denoted $d_{\min}$ and the maximal degree is denoted $d_{\max}$.
Graphs with high $d_{\min}$ are called \emph{high-degree} graphs and graphs with low $d_{\max}$ are called \emph{low-degree graphs}.
We focus on regular graphs, these are graphs for which $d_{\min} = d_{\max}$.

The evolutionary process plays out on graph $G=(V,E)$.
Each node is occupied by one individual, either a resident with reproductive rate $1$ or a mutant with reproductive rate $r>0$.
We consider discrete-time birth--death updating.
In one step, one individual is selected randomly, proportionally to its reproductive rate to reproduce ($r/F$ for mutant and $1/F$ for resident, where $F$ is the total reproductive rate in the graph).
The offspring inherits the type of its parent and replaces an individual at some neighboring node.
In the \emph{standard Moran process}, denoted $\mor$, the offspring replaces a random neighbor of the parent.
In the \emph{Moran process with replacers}, denoted $\sen$, the offspring replaces a random neighbor of an opposite type of the parent (if at least one such neighbor is available).
If all neighbors have the same type as the parent, then the reproduction is wasted and nothing changes.

Our key quantity is the fixation probability of a single replacer, placed at a random node of the graph $G_N$, who invades a background population of replacer residents.
We denote this fixation probability by $\fps(G_N)$.
For comparison, we denote by $\fp(G_N)$ the analogous fixation probability in the standard Moran process, that is, when both the invader and the replacers are oblivious.

\section{Results}

We present three types of results.
First, we study the well-mixed populations and high-degree graphs where each node has many neighbors.
Second, we cover the opposite end by studying low-degree graphs such as lattices.
Third, we present general bounds for arbitrary graphs.

\subsection{High-degree graphs}
First, we consider the complete graphs $K_N$ which model well-mixed populations.
The state of the evolutionary dynamics on $K_N$ can be represented by the number $k\in\{0,1,2,\dots,N\}$ of invaders, where $k=N$ corresponds to invaders successfully taking over the population and $k=0$ corresponds to invaders becoming extinct. Initially, we have $k=1$.
For each $k\in\{1,2,\dots,N-1\}$ we denote by $p_k$ the probability that in the next step of the evolutionary process the number of invaders increases to $k+1$.
Likewise, we denote by $q_k$ the probability that it decreases to $k-1$.
A direct calculation yields
\[
p_k = \frac{k\cdot r}{k\cdot r + (N-k)\cdot 1}\cdot 1,
\]
since $\frac{k\cdot r}{k\cdot r + (N-k)}$ is the probability that one of the $k$ invaders is selected for reproduction and once it is selected, it always replaces a resident. Similarly, we obtain
\[q_k = \frac{(N-k)\cdot 1}{k\cdot r + (N-k)\cdot 1}\cdot 1.
\]
Thus, the backward bias $\gamma_k$ defined as $\gamma_k=\frac{q_k}{p_k}$ satisfies $\gamma_k=\frac{N-k}{k\cdot r}$. 

Note that for small $k$ (namely $k < N/(1+r)$) the backward bias is greater than 1, thus a successful invasion is unlikely.
Applying known results on absorption probabilities of 1-dimensional Markov chains~\cite{nowak2007evolutionary} we are then able to derive a formula for the invader fixation probability $\fps(K_N)$ on the complete graph $K_N$.
Moreover, we extend this argument to obtain a bound on the invader fixation probability $\fps(K_N)$ for any graph $G_N$ where each node has degree at least $D$.

\begin{thm}\label{thm:kn}
\leavevmode
\begin{enumerate}
    \item (Complete graph) Let $K_N$ be a complete graph and $r>0$. Then
\[
\fps(K_N) = \frac{1}{\left(1+\frac1r\right)^{N-1}}.
\]
\item (High-degree graphs) Let $G_N$ be any graph with minimum degree $D$. Then
\[
\fps(G_N) \leq \frac{1}{(1+\frac1r)^D}.
\]
\end{enumerate}
\end{thm}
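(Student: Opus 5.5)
For part 1 we use the explicit formula for absorption probabilities of a birth--death chain. On $K_N$ the number of invaders $k$ is itself a Markov chain. In each step it moves up with probability $p_k$, down with probability $q_k$, or stays put. Its fixation probability from state $1$ is therefore
\[
\fps(K_N)=\frac{1}{1+\sum_{j=1}^{N-1}\prod_{k=1}^{j}\gamma_k}.
\]
Substituting $\gamma_k=\frac{N-k}{kr}$ gives
\[
\prod_{k=1}^j\gamma_k=\frac{(N-1)(N-2)\cdots(N-j)}{j!\,r^j}=\binom{N-1}{j}r^{-j}.
\]
The $1$ in the denominator is the $j=0$ term, so the denominator equals $\sum_{j=0}^{N-1}\binom{N-1}{j}r^{-j}=(1+\frac1r)^{N-1}$ by the binomial theorem. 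This is the claimed formula.

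For part 2 we observe that fixation on $G_N$ requires the number of invaders $k_t$ to reach $D+1$ before hitting $0$. We bound this probability by comparison with the birth--death chain on $\{0,\dots,D+1\}$ with backward biases $\gamma'_k=\frac{D-k+1}{kr}$. Fix a configuration with $1\le k\le D$ invaders. Each invader has at most $k-1$ invader neighbours, so it has at least $D-k+1\ge1$ resident neighbours. Hence every invader reproduction increases $k$ by exactly one. In particular any single invader has at least $D-k+1$ resident neighbours, and each of them is adjacent to an invader. Let $m$ be the number of residents adjacent to some invader, so $m\ge D-k+1$. Every reproduction by one of these $m$ residents decreases $k$ by one, and all other reproductions leave the state unchanged. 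Conditioned on a change of $k$, the probability of moving up is therefore $\frac{kr}{kr+m}\le\frac{kr}{kr+D-k+1}$. This is exactly the up-probability of the comparison chain at state $k$.

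The main obstacle is that $k_t$ is not a Markov chain on a general graph, so the comparison must be justified rather than read off. The plan is a supermartingale argument. Let $f(k)$ be the probability that the comparison chain started at $k$ hits $D+1$ before $0$. This $f$ is increasing in $k$, with $f(0)=0$ and $f(D+1)=1$. It satisfies the harmonic relation
\[
f(k)=\pi_k f(k+1)+(1-\pi_k)f(k-1), \qquad \pi_k=\frac{kr}{kr+D-k+1}.
\]
Since the true up-probability conditioned on a move is at most $\pi_k$, and $f(k+1)>f(k-1)$, the process $f(k_t)$ stopped on leaving $\{1,\dots,D\}$ is a supermartingale. The process is absorbed almost surely, so optional stopping gives
\[
\Pr[\text{reach } D+1 \text{ before } 0]\le f(1).
\]
Finally we compute $f(1)$ exactly as in part 1:
\[
f(1)=\frac{1}{\sum_{j=0}^{D}\prod_{k=1}^{j}\frac{D-k+1}{kr}}=\frac{1}{\sum_{j=0}^D\binom{D}{j}r^{-j}}=\frac{1}{(1+\frac1r)^D}.
\]
This holds for every starting node, so $\fps(G_N)\le (1+\frac1r)^{-D}$.
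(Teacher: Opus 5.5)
Your proof is correct and follows the paper's approach. For $K_N$ you use the same backward biases $\gamma_k=\frac{N-k}{kr}$ and the same binomial identity. For general graphs you use the same estimate (at most $k$ active invaders, at least $D-k+1$ active residents), fed into the one-dimensional absorption formula truncated at $D+1$ invaders. Your supermartingale and optional-stopping step is a welcome addition: it rigorously justifies comparing the non-Markovian invader count $k_t$ with the birth--death chain, a step the paper merely asserts when it applies the one-dimensional formula.
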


Note that the invader fixation probability is exponentially small in $N$, for every constant $r$.
Thus, even if the invading replacer has a constant advantage ($r>1$), it is exponentially unlikely to successfully invade an established population of replacer residents. 
Recall that in the standard Moran process with oblivious invaders and oblivious residents we have $\fp(K_N) = \frac{1-1/r}{1-1/r^{N}}$, thus advantaged invaders have a constant chance of roughly $1-\frac1r$ of taking over.
Therefore, established replacers are much better protected against replacer invasions than oblivious individuals against oblivious invasions.
This remains true even when there are initially $i>1$ invaders, as long as $i$ is sufficiently small, see~\cref{fig:kn} for an illustration and \SI{} for details.

\begin{figure}[!hbt]
    \centering
    \includegraphics[width=0.14\linewidth,page=2,valign=c]{figures/fig_graphs.pdf}
    \includegraphics[width=0.42\linewidth,valign=c]{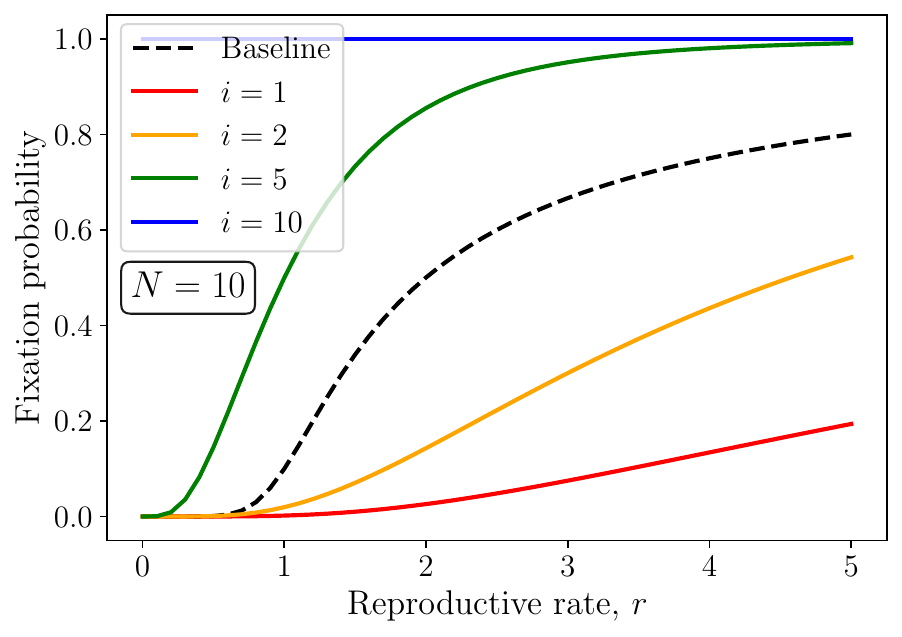}
    \includegraphics[width=0.42\linewidth,valign=c]{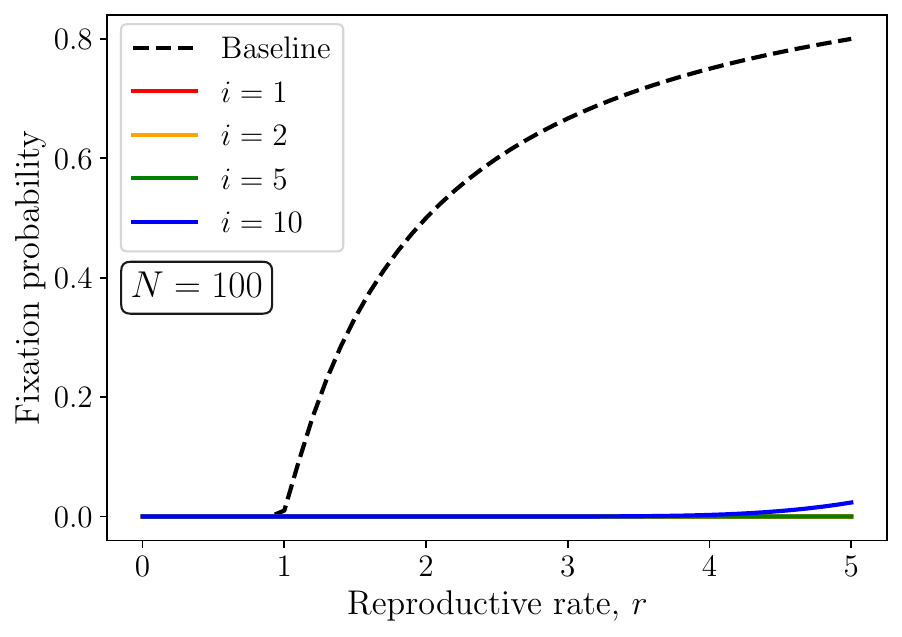}
        \caption{
        \textbf{Evolutionary stability of well-mixed populations.}
        \textbf{a,} A well-mixed population of size $N$ is represented by a complete graph $K_N$.
        \textbf{b,}
        The fixation probability $\fps(K_{10})$ of a single replacer with relative reproductive rate $r$ %
        who is invading a well-mixed population $K_{10}$ of replacers (red line) is substantially smaller than the corresponding fixation probability for an oblivious mutant invading an oblivious population (dashed line). For comparison, we also plot the fixation probabilities when there are initially $i\in\{1,2,5,10\}$ replacer invaders (colors). Here $N=10$.
        \textbf{c,} With population size $N=100$, the difference is even more pronounced -- large well-mixed populations $K_N$ are extremely robust with respect to replacer invasions with any fixed relative reproductive rate $r$, see~\cref{thm:kn}.
        The plotted values are obtained from a formula given by Lemma~2 %
        in \SI{}.
        }\label{fig:kn}
\end{figure}

The intuition behind the first part of~\cref{thm:kn} is that due to the high connectivity of the complete graph, the initial invader is likely to immediately disappear.
When $N$ is large, the individual selected for the first reproductive step will typically be one of the established residents. Since each of those individuals has the invader as a neighbor, the neighbor gets immediately eliminated and the invasion attempt fails.
More careful accounting yields that the invader fixation probability is even exponentially small in $N$.

In fact, as stated in the second part of~\cref{thm:kn}, an analogous statement holds for any high-degree graph.
Consider a fixed $r>1$ and any graph $G_N$, where each node has degree at least $D\ge \frac1{100}N$, that is, roughly speaking each node is connected to at least 1 \% of all the other nodes.
Then the replacer fixation probability satisfies $\fps(G_N)\le 1/c^N$, where $c=(1+\frac1r)^{1/100}>1$ is a fixed constant. In particular, the fixation probability is exponentially small in the population size $N$, see \SI{} for details.
Given this intuition, it is natural to investigate graphs where nodes have few neighbors.

\subsection{Low-degree graphs}
A complete graph is a regular graph where each node has the largest possible degree.
At the opposite extreme, there exists a unique connected 2-regular graph, namely the cycle graph $C_N$, that models a 1-dimensional lattice with a periodic boundary condition.
Similarly, a 2-dimensional lattice is represented by a square grid graph $\Sq_N$ (with periodic boundary condition) where each node has 4 neighbors. Other 4-regular graphs exist, for example the Bead graph $B_N$, see~\cref{fig:fp_vs_r}.

\begin{figure}[!hbt]
    \centering
\includegraphics[width=0.4\linewidth]{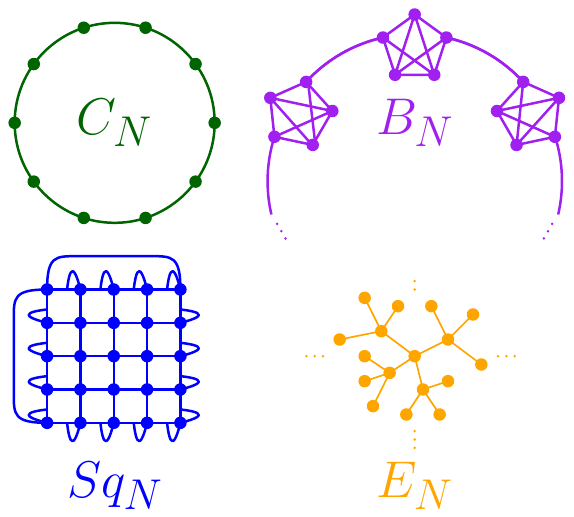}\qquad 
\includegraphics[width=0.45\linewidth]{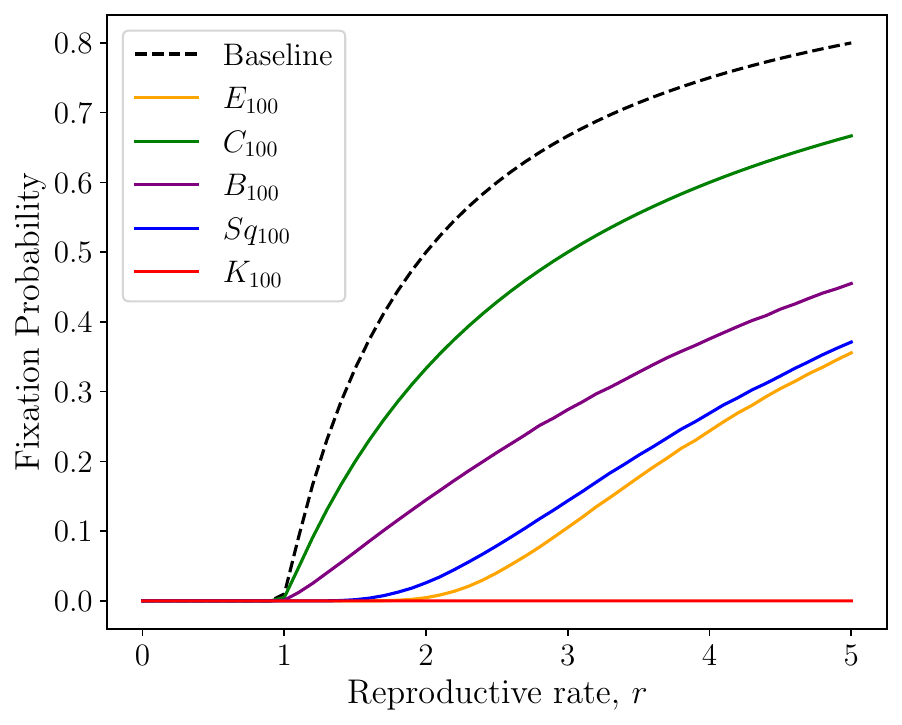}
        \caption{
        \textbf{Evolutionary stability on low-degree graphs.}
        \textbf{a,} In a cycle graph $C_N$ (green), each node has two neighbors. In a grid graph $\Sq_N$ with periodic boundary condition (blue), in a bead graph $B_{N}$ (purple), and in a 4-regular expander graph $E_N$ (orange), each node has 4 neighbors.
        \textbf{b,}
        Fixation probability $\fps(G_{N})$ of a single replacer invader ($y$-axis) crucially depends both on the invader reproductive rate $r$ ($x$-axis) and on the spatial structure $G_N$ (colors).
        Generally, sparser graphs such as the cycle $C_N$ lead to higher fixation probabilities, but even on equally dense graphs the fixation probability is sensitive to features of the graph.
        In particular, for different 4-regular graphs on $N=100$ nodes with $r=1.5$ we have $\fps(\Sq_{100})<0.2\%$, whereas $\fps(B_{100})>7\%$.
        Complete graph $K_N$ and the natural baseline corresponding to the oblivious individuals shown for comparison.
        Values for $\Sq_N$, $B_N$, and $E_N$ computed by $10^6$ simulations for $r\in\{0,0.1,0.2,\dots,5\}$. For $E_N$, we consider 5 random 4-regular graphs and plot the values for the one graph that minimizes $\rho^{\sen}_{r=2}(G_{100})$. 
        Values for $C_N$ and $K_N$ given by the formulas in Theorems~1 and~2. %
        }\label{fig:fp_vs_r}
\end{figure}

The evolutionary dynamics of replacers on such low-degree graphs is very different from the dynamics on the complete graph, see~\cref{fig:fp_vs_r}.
While replacer fixation probability on large complete graphs remains exponentially small even when the invader is advantageous ($r>1$), on cycle graphs $C_N$ the fixation probability is substantial already for $r$ marginally larger than $1$.
For 4-regular graphs, the replacer fixation probability depends on specific features of the graph.
On some graphs such as the bead graph $B_N$, it becomes substantial already for $r$ marginally larger than 1, whereas on other graphs it remains negligible for reproductive rates as high as roughly 2.
For even larger population sizes, we can prove the following results about the cycle graphs and about different 4-regular graphs.

\begin{thm}\label{thm:cn_main}
\leavevmode
\begin{enumerate}
    \item (Cycle graph) Let $C_N$ be a cycle graph. Then    %
    \begin{enumerate}
        \item %
         If $r=1$ then $\fps(C_N)=\frac1{2N-2}$.
        \item %
         If $r>1$ and $N$ is large then
        $\fps(C_N) \approx\frac{1-\frac1r}{1+\frac1r}$.
        \item %
         If $r<1$ and $N$ is large then $\fps(C_N)\sim r^N$.
    \end{enumerate}
    
    \item (4-regular graphs)
    \begin{enumerate}
        \item For every $r\ge 1.1$ there exists a positive constant $c$ and large 4-regular graphs $B_N$ such that $\fps(B_N)>c$.
        \item For every $r\le 1.9$ there exists a positive constant $c>1$ and large 4-regular graphs $E_N$ such that $\fps(E_N)<1/c^N$.
    \end{enumerate}
    
    \end{enumerate}
\end{thm}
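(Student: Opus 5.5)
For the cycle $C_N$ we reduce to a one-dimensional birth--death chain. On a cycle the mutants always form a contiguous arc (one arc starting from a single node), so the state is the arc length $k$. For $1\le k\le N-2$ there are exactly two boundary edges. Each boundary mutant and each boundary resident has exactly one neighbor of the other type, so as a replacer it always replaces across the boundary.

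\textbf{The $k=1$ state is special.} The lone mutant has two resident neighbors and always grows. Each of its two resident neighbors kills it with certainty when selected.

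\textbf{Interior states $2\le k\le N-2$.} There are two boundary mutants and two boundary residents; all other individuals waste their turns. Conditioning on a non-wasted step, $p_k/(p_k+q_k)=2r/(2r+2)$, so the backward bias is $\gamma_k=1/r$. The state $k=N-1$ is symmetric to $k=1$: the single resident is killed by either neighbor, so again $\gamma=1/r$.

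\textbf{At $k=1$.} One mutant of weight $r$ grows, while two residents kill it, so $\gamma_1=2/r$.

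Applying the standard absorption formula $\rho=1/\bigl(1+\sum_{j=1}^{N-1}\prod_{i=1}^{j}\gamma_i\bigr)$ gives
\[
\fps(C_N)=\frac{1}{1+\frac2r\sum_{j=0}^{N-2}r^{-j}}.
\]
From this closed form the three cases follow:
\begin{itemize}
\item[(a)] For $r=1$ we get $1/(1+2(N-1))$. This is $1/(2N-1)$, not $1/(2N-2)$, so I would recheck the boundary cases carefully, especially $k=N-1$, whose neighbors are both mutants. If indeed $\gamma_{N-1}=1/(2r)$ by symmetry with $k=1$, the final term changes and the count gives $1/(2N-2)$.
\item[(b)] For $r>1$ the sum tends to $\frac{2}{r}\cdot\frac{1}{1-1/r}=\frac{2}{r-1}$, so $\fps\to\frac{r-1}{r+1}=\frac{1-1/r}{1+1/r}$.
\item[(c)] For $r<1$ the sum is dominated by $r^{-N}$, giving $\fps\sim r^N$ up to constants.
\end{itemize}

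\textbf{Bead graphs $B_N$, part 2(a).} I would take $B_N$ as a cycle of small gadgets, e.g. copies of $K_4$ minus an edge joined in a ring, which is 4-regular. The plan is to show that once a bead is fully mutant, the mutant front moves bead-to-bead like a biased walk with forward bias tending to something above 1 for $r\ge1.1$. At the interface between beads, only a couple of individuals on each side are active, and the ratio of progress to regress is a computable constant depending on $r$. One then has to:
\begin{itemize}
\item bound from below the probability of conquering the first bead by a constant, which is fine since this is a finite computation;
\item compare the front dynamics to a biased random walk on bead indices, giving constant fixation probability whenever the effective bias exceeds 1.
\end{itemize}
The constant $1.1$ would come out of optimizing the gadget.

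\textbf{Expanders $E_N$, part 2(b).} This is the main obstacle. The idea is that in a good 4-regular expander, any mutant set $S$ with $|S|\le N/2$ has edge boundary at least $h|S|$. Replacers on the resident side act on boundary edges. I would bound the rate at which $S$ grows, roughly $r\cdot|\partial_V^{in}S|$, against the rate at which it shrinks, roughly $|\partial_V^{out}S|$, the number of resident nodes adjacent to $S$. With strong vertex expansion, the outer vertex boundary exceeds the inner one by a factor above $1.9$ for small sets. This holds, for instance, for random 4-regular graphs locally tree-like on small sets: a tree of $k$ nodes with degree 4 has about $2k$ outer neighbors against at most $k$ boundary nodes.

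Then a supermartingale such as $\gamma^{|S|}$, or a drift argument on $|S|$, shows that reaching size $\epsilon N$ has probability $c^{-N}$ for $r\le1.9$. The hard part is making the vertex-expansion ratio exceed $r$ uniformly for all sets up to linear size, not just trees. I would handle this with a counting argument on random regular graphs for sets up to $\epsilon N$, together with the observation that reaching $\epsilon N$ already requires beating the drift for linearly many steps.
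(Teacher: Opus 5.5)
Your cycle and expander arguments follow the paper's route. Part 2(a) has a genuine gap, and the cycle computation has a slip that you should fix rather than hedge.

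\textbf{Cycle and expanders.} At $k=N-1$ the lone resident is flanked by two active mutants, so the up and down rates are $2r$ and $1$. Hence $\gamma_{N-1}=1/(2r)$. This is the mirror image of $\gamma_1=2/r$, not equal to $1/r$; your own phrase ``killed by either neighbor'' already says so. Then $\prod_{i\le j}\gamma_i=2r^{-j}$ for $1\le j\le N-2$ and $r^{-(N-1)}$ for $j=N-1$, which gives
\[
\fps(C_N)=\frac{1}{1+2\sum_{j=1}^{N-2}r^{-j}+r^{-(N-1)}}.
\]
This equals $1/(2N-2)$ at $r=1$ and $\frac{1-1/r}{1+1/r-r^{-(N-1)}-r^{-N}}$ for $r\neq1$, exactly the paper's formula. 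Your asymptotics in (b) and (c) are unaffected. For 2(b) your plan is the paper's. The paper simply cites that a random $D$-regular graph is an $(\alpha N,D-1.01)$ vertex expander with probability at least $1/2$. This gives $\AR(M)\ge|N(M)|-|M|\ge 1.99\,\AM(M)$ whenever $|M|<\alpha N$ and $D=4$, so the backward bias is at least $1.99/1.9>1$. Your counting argument would just re-prove that cited lemma.

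\textbf{Bead graphs.} You name the skeleton (take the first bead with constant probability, then a biased walk on bead indices) but omit the two steps that make it a proof.

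First, in $\sen$ the bead index is not a Markov chain. Mutants can push into the next bead before the current one is fully taken, residents can re-enter, and several beads can be mixed at once. So there is no ``effective bias'' until you build a comparison. The paper does this with a modified process $\sen'$ that is pessimistic for mutants:
\begin{itemize}
\item a mutant in a mixed bead may not reproduce into the adjacent resident bead;
\item if a resident of the mixed bead takes the vertex attached to the adjacent mutant bead, the whole mixed bead reverts to residents.
\end{itemize}
This keeps at most one mixed bead and turns each excursion from a clustered state into a finite absorption problem on the $2^5$ configurations of one bead.

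Second, the bias must actually be computed. In a mixed bead up to seven vertices are active, so it is not a count of active individuals at the interface. With $p^+=r/(r+1)$, $p^-=1/(r+1)$ and the numerically obtained $q^+>7.9\%$, $q^-<8.2\%$ at $r=1.1$, the bead-level bias is only about $1.04$. So ``the constant $1.1$ would come out of optimizing the gadget'' is precisely the missing step; nothing in your argument shows the bias exceeds $1$ near $r=1.1$.

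Your gadget is also inconsistent. $K_4$ minus an edge has degree sequence $(3,3,2,2)$, so a 4-regular ring of such beads needs six external edge-ends per bead, about three edges between consecutive beads. That is not the narrow interface you assume. A single-edge interface forces beads of at least five vertices. The paper uses five-vertex beads, which with 4-regularity means $K_5$ minus an edge attached through the endpoints of the missing edge.
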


The first part of~\cref{thm:cn_main} implies that, compared to the oblivious baseline $\fp(C_N)=1/N$, the fixation probability of neutral replacer invaders ($r=1$) is roughly halved.
For advantageous invaders and large $N$, the oblivious baseline is given by $\fp(C_N)\approx 1-1/r$, so the fixation probability of replacer invaders drops by a factor of roughly $1+1/r$.
For disadvantageous invaders ($r<1$), both $\fp(C_N)$ and $\fps(C_N)$ are exponentially small in $N$.
Therefore, established replacers are better protected against replacer invasions than oblivious individuals against oblivious invasions, but the effect is weaker than on the complete graph.
In particular, replacers with any advantage $r>1$ are reasonably likely to successfully invade on a cycle graph, whereas they are exponentially unlikely to successfully invade on the complete graph.

The idea behind the evolutionary dynamics on cycles is that, as with the complete graphs, the current state of the population on a cycle graph can be represented by the number $k$ of invaders (who always form a contiguous block along the cycle), so we again obtain a 1-dimensional Markov chain.
A calculation similar to the case of the well-mixed population yields
the backward biases $\gamma_k= \frac{1}{r}$ for $2 \le k \le N-1$.
This allows us to derive an exact formula for the fixation probability on a cycle graph, for any $r$ and any $N$.
Analyzing the formula in three regimes $r=1$, $r>1$, $r<1$ gives the results presented in the first part of~\cref{thm:cn_main}, see \SI{} for details.

The second part of~\cref{thm:cn_main} shows that the effects observed in~\cref{fig:fp_vs_r} for $N=100$ remain in place even for larger population sizes $N$.
Namely, for some 4-regular graphs such as the bead graphs $B_N$ the replacer fixation probability becomes subst,antial already for invader advantages $r$ that are only mildly larger than $1$.
In contrast, for other 4-regular graphs, the fixation probability remains exponentially small for a range of $r$ values up to at least $r\le 1.9$.

The idea behind the analysis of the bead graphs is to show that during the evolutionary dynamics, the individual beads are typically homogeneous (either all occupied by invaders or all occupied by residents). And since the invaders have a small advantage $r>1$, they are more likely to take over the next bead along the chain, as compared to losing a bead.
The dynamics is thus not too different from the dynamics on the cycle graph.

In contrast, on other 4-regular graphs the interface between invaders and residents is more complicated.
Typically, early on in the process there are fewer invaders than residents along this interface, which results in the invader population being more likely to diminish than to increase.
This bias against invaders then makes their fixation probability exponentially small, even if their relative advantage is as high as $r=1.9$.
In fact, this situation occurs whenever the underlying graph is a relatively good expander~\cite{alon1986eigenvalues}.
Since most 4-regular graphs indeed are relatively good expanders~\cite{expanders}, this behavior is typical in the sense that it occurs for at least 50 \% of large randomly generated 4-regular graphs, see \SI{} for details.

We conclude this section with two more comments.
First, our last result generalizes to low-degree graphs with higher connectivity.
For example, in a 3-dimensional cubic lattice each node is adjacent to 6 other nodes, and we can show that for 50 \% of 6-regular graphs, the replacer fixation probability is exponentially small, as long as the invader has relative reproductive rate $r\le 3.9$.
In general, on $d$-regular graphs with $d\ge 4$ the replacers are typically extremely well protected against invasions with reproductive rates as high as $r\le d-2.1$. See \SI{} for details.

Second, we note that square grids $\Sq_N$ are actually not that good expanders, and in~\cref{fig:fp_vs_r} we indeed observe that $\fps(\Sq_{100})$ becomes substantial already for $r<1.9$. This points at the expansion property of the underlying graph as a critical features that affects the replacer dynamics.

\subsection{Arbitrary graphs}
As our third main contribution, we provide bounds on the evolutionary stability with respect to replacer invasions that apply to any spatial structure, whether regular or not.

Recall that in the standard Moran process with oblivious individuals, the spatial structures represented by regular graphs are well-behaved in the sense that the fixation  probability of a single (oblivious) invader with relative reproductive rate $r\ne 1$ is given by the formula $\frac{1-1/r}{1-1/r^N}$. 
For non-regular graphs, the fixation probability is generally different, and it can be both higher and lower than the baseline given by the regular graphs~\cite{lieberman2005evolutionary}.
For small population sizes $N$, it is possible to numerically compute features of the evolutionary dynamics~\cite{hindersin2019computation,hindersin2015most,moller2019exploring}.
Here we compute the fixation probability for a single replacer invading a population of replacers, for each of the 853 connected graphs on $N=7$ vertices, see~\cref{fig:all_vertices}.

\begin{figure}[!hbt]
    \centering
    \includegraphics[width=0.45\textwidth]{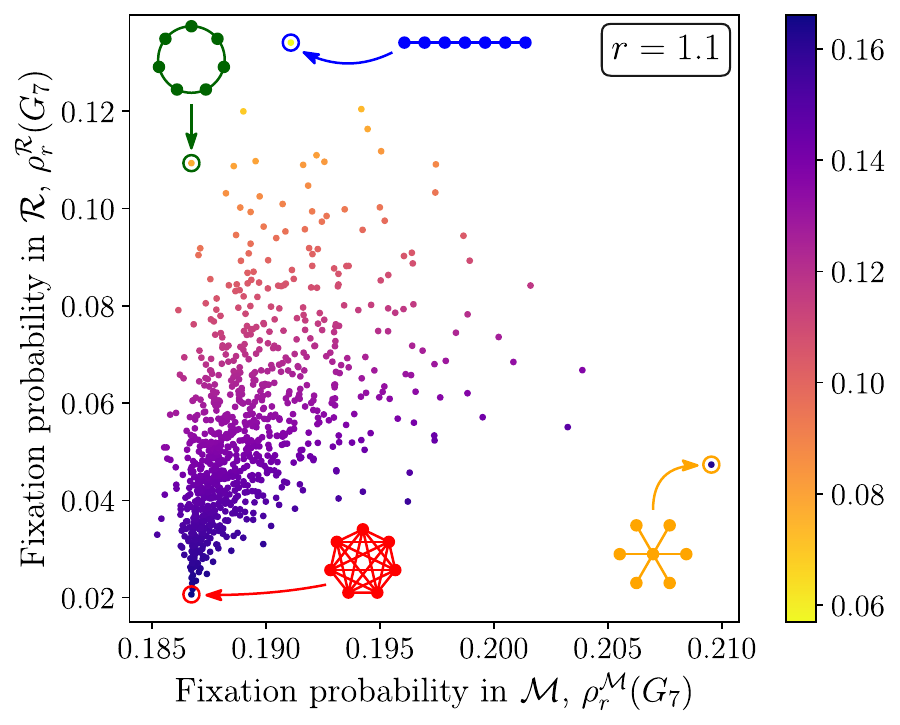}
    \includegraphics[width=0.45\linewidth]{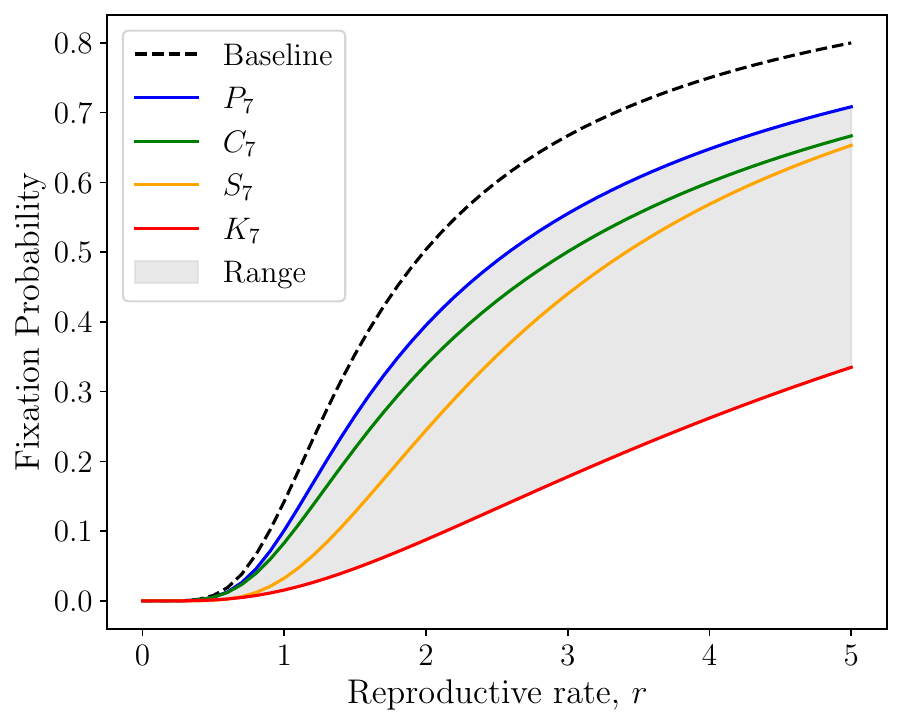}
        \caption{
        \textbf{All spatial structures of size $N=7$.}
        \textbf{a,} For each of the 853 connected spatial structures with population size $N=7$, we compute the fixation probability of one randomly placed mutant in the standard Moran process ($x$-coordinate) and in the Moran process with replacers ($y$-coordinate). The color of each dot in the scatter plot represents the difference between the two probabilities.
        The difference is always positive, indicating that replacers are more robust with respect to replacer invasions compared to oblivious individuals with respect to oblivious invasions.
        With replacer invasions, the complete graph is the most robust.
        Interesting outlier graphs are highlighted. Here $r=1.1$, %
        see~\SI{} for $r\in\{0.9,2\}$.
        \textbf{b,} For each $r\in\{0,0.1,0.2,\dots,5.0\}$, the replacer fixation probability $\fps(G_7)$ is maximized when $G_7=P_7$ is the path graph (blue), and minimized when $G_7=K_7$ is the complete graph (red).
        For reference, we also plot fixation probabilities for the cycle graph $C_7$, the star graph $S_7$, and the natural baseline for the oblivious case.
        }\label{fig:all_vertices}
\end{figure}

We find that all fixation probabilities lie below the natural baseline given by the regular graphs in the oblivious case.
Thus, established replacers are more robust with respect to replacer invasions than the natural baseline.
Moreover, among the connected graphs on $N=7$ nodes, the fixation probability is lowest for the complete graph $K_7$, and highest for the path graph $P_7$.

Repeating the calculation for larger $N$ quickly becomes infeasible, since already for %
$N=15$ there are more than 10 quintillion ($10^{19}$) 
connected graphs~\cite{oeis}.
Nevertheless, we are able to mathematically prove two results that apply to any spatial structure of any size.
They show that established replacers are more robust with respect to replacer invasions than the natural baseline, even for large population sizes $N$, provided that the invader is near-neutral or sufficiently strong.

\newpage

\begin{thm}\label{thm:arbitrary_main}
\leavevmode
\begin{enumerate}
    \item ($r= 1$)
    Let $G_N$ be any graph with $N\geq 3$ nodes. Then $\fpns(G_N)<\frac{1}{N}=\fpn(G_N)$.
    \item ($r\ge 2$) Let $G_N$ be any graph with $N\ge 2$ nodes and $v$ a node of $G_N$. Let $r \geq 2$. Then
    \[
        \fps(G_N, v) \le \frac{1-\frac{1}{r}}{1-\frac{1}{r^N}}\,.
    \]
    Moreover, the equality occurs if and only if $G_N$ is the path graph and $v$ is one of its endpoints.
    \end{enumerate}
\end{thm}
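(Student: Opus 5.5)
For the second part ($r\ge 2$) I plan a potential-function (supermartingale) argument that generalises the birth--death chain analysis behind \cref{thm:kn}. For a mutant set $S$ with $|S|=k$, let $B_M(S)$ be the set of mutants having a resident neighbour and $B_R(S)$ the set of residents having a mutant neighbour. In the replacer process, every selected node of $B_M$ converts some resident and every selected node of $B_R$ converts some mutant, so the forward and backward rates are $r|B_M|$ and $|B_R|$. The baseline $\frac{1-1/r}{1-1/r^N}$ is the fixation probability of the birth--death chain with constant backward bias $1/r$. This is exactly what the path started from an endpoint produces, since there the mutants always form a prefix and $|B_M|=|B_R|=1$; this already gives the equality case.

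For the inequality, the naive potential $\psi(S)=\frac{1-r^{-|S|}}{1-r^{-N}}$ has one-step drift proportional to $r^{-k}(r-1)(|B_M|-|B_R|)$. It is therefore not a supermartingale in general, as the star with mutant leaves and a resident centre shows. So I would look for a refined potential $g(S)\ge\Pr[\text{fixation from }S]$ with
\[
g(V)=1,\qquad g(\emptyset)=0,\qquad g(\{v\})\le\frac{1-\frac1r}{1-\frac1{r^N}},
\]
which is superharmonic for the replacer chain. My first attempt is $g(S)=\frac{1-r^{-\phi(S)}}{1-r^{-N}}$, where $\phi(S)\le|S|$ discounts mutants that are ``trapped'', for instance $\phi(S)=|S|-(\text{number of mutant components})+1$, or a measure based on mutant boundary edges. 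The hypothesis $r\ge2$ should enter when a single resident boundary node $u$ is adjacent to several mutants. Such a $u$ fires at rate $1$ but can undo a lot of structure, whereas each mutant neighbour of $u$ fires at rate $r\ge 2$ and may target residents other than $u$. The factor $2$ is what should pay for one backward step against two forward options in the local drift inequality. Strictness would come from a single state visited with positive probability where the drift inequality is strict, and such a state exists unless the graph is a path started at an endpoint.

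For the first part ($r=1$) I would exploit the type-swap symmetry of the neutral replacer process. Both types follow identical rules, so writing $f(S)$ for the fixation probability of mutant set $S$, we have $f(S)+f(V\setminus S)=1$. The claim is $\sum_v f(\{v\})<1$. The plan is to prove a strict subadditivity-type inequality, $f(\{v\})+f(V\setminus\{u,v\})\le f(V\setminus\{u\})=1-f(\{u\})$ with strict inequality for a suitable pair, and then to iterate it or average it over pairs $u,v$. An alternative route is a direct martingale computation: since $\mathbb{E}[\Delta|S|]\propto |B_M|-|B_R|$, consider $X_t=\sum_{v\in S_t} w_v$ with degree-type weights $w_v$. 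From the single-mutant state the drift is $1-d_v\cdot\frac{1}{\text{(rates)}}\le 0$ and is strictly negative at some node once $N\ge3$, because the graph then has a node of degree at least $2$. Optional stopping then yields $\sum_v f(\{v\})\le 1$ with strictness.

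The main obstacle in both parts is the same: $|B_M|$ can exceed $|B_R|$ (clusters of mutants around a single resident hub), so the number of mutants alone is not a supermartingale-friendly statistic. Most of the work will go into finding the correct state-dependent potential, which I would first test against the star and the path for small $N$ before proving it in general.
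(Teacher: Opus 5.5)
Both parts have a genuine gap. In each, the step that carries all the content is left open, and the concrete candidates you name do not work.

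For $r\ge2$, everything rests on a superharmonic majorant $g$ that you never construct, and your trial $\phi(S)=|S|-\#\{\text{components}\}+1$ is not one. Take the star $K_{1,3}$ with $r=2$ and start from a mutant at the centre $c$. The state $L_2=\{\ell_1,\ell_2\}$ (two mutant leaves, resident centre) is reachable via $\{c\}\to\{c,\ell_1\}\to\{c,\ell_1,\ell_2\}\to\{\ell_1,\ell_2\}$. It has $\phi(L_2)=1$, so $g(L_2)=8/15$. But the two-state recursion through $\{c,\ell_1,\ell_2\}$ gives $f(L_2)=(8+3f(\{\ell_1\}))/11\ge 8/11$, so $g$ does not even dominate the fixation probability.

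The idea you are missing is that no global potential is needed: the initial bottleneck alone forces the bound.
\begin{enumerate}
\item If $\deg v=d\ge2$, the first active step kills the invader with probability $d/(d+r)\ge 2/(2+r)\ge 1/r$. This is precisely where $r\ge2$ enters. Hence $\fps(G_N,v)\le 1-\frac1r<\frac{1-1/r}{1-1/r^N}$.
\item If $\deg v=1$ and $G_N$ is not a path, the invaders must first run along the pendant path from $v$ to a node $u$ of degree $d\ge3$; let this path have $k$ nodes including $v$ and $u$. The backward bias is $1/r$ until $u$ is taken and then jumps to $(d-1)/r$. The one-dimensional absorption formula therefore bounds the probability of ever getting past $u$ by $1/\bigl(1+\frac1r+\dots+\frac{1}{r^{k-1}}+\frac{d-1}{r^k}\bigr)\le 1/\sum_{i\ge0}r^{-i}=1-\frac1r$, using $d-1\ge2\ge\frac{r}{r-1}$.
\end{enumerate}
This also gives the ``only if'' half of the equality statement, which your sketch defers to an unidentified state of strict drift.

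For $r=1$, your Route A reduces, via $f(S)+f(V\setminus S)=1$, to the superadditivity $f(A)+f(\{v\})\le f(A\cup\{v\})$. You offer no mechanism for proving it, and that is where all the content lies.

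Route B fails outright. Optional stopping needs the drift inequality at every reachable state, not only at singletons, and on $K_{1,n}$ with $n\ge3$ no positive weights work. Start from a mutant centre. The state where the centre and all leaves except $\ell$ are mutants has drift $w_\ell-w_c$, which forces $w_\ell\le w_c$. The state reached from it when $\ell$ converts the centre has drift $(n-1)w_c-\frac{1}{n-1}\sum_{\ell'\ne\ell}w_{\ell'}\ge(n-2)w_c>0$.

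The missing idea is the paper's colouring coupling.
\begin{enumerate}
\item Compare with the replacer process in which all $N$ individuals carry distinct colours. The lineage of $v$ there has the same gain rates as the two-type mutant set and weakly smaller loss rates, because a resident may spend its offspring on a differently coloured resident.
\item The two-type replacer process is attractive: adding mutants raises residents' conversion rates and lowers mutants' loss rates. Together with the previous point, this gives $f(\{v\})\le\Pr[\text{colour }v\text{ fixes}]$.
\item Exactly one colour fixes, so these probabilities sum to $1$.
\item Strictness for $N\ge3$ comes from a start where some resident is adjacent both to the invader and to another resident.
\end{enumerate}
The same coupling with three colours on $A$, $\{v\}$ and the rest, combined with your swap symmetry, would also supply the superadditivity that Route A needs.
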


We now discuss the two claims separately.

Regarding the first claim, note that the assumption $N\ge 3$ is necessary, since for $N=2$ there is only one connected graph $G_2$ and we have $\fpns(G_2)=\fpn(G_2)=\frac12$.
Also, note that since the inequality $\fpns(G_N)<\frac{1}{N}$ is strict and the fixation probability on any graph is a continuous function of the invader reproductive rate $r$, an analogous claim can be made for near-neutral invaders. That is,
for every graph $G_N$ with $N\geq 3$ nodes there exists $\varepsilon >0$ such that for every $r \in (1-\varepsilon,1+\varepsilon)$ we have both $\fps(G_N) < \frac1N$ and $\fps(G_N) < \fp(G_N)$. 

The idea behind the proof of the first claim is to use a technique called coupling~\cite{lindvall2002lectures}. Instead of the Moran process with a single replacer invading a homogeneous population of replacers, we analyze a different process where the established population consists of $N-1$ mutually different replacers.
The proof then has two components.
First, we show that the invader fixation probability in this new process is strictly higher than in the homogeneous case.
Intuitively, this is because the established replacers now compete not only with the invader, but also with each other.
Second, using a symmetry argument we show that the invader fixation probability in the new process is equal to exactly $1/N$.
See \SI{} for details.

Next, we discuss the second claim that concerns invaders with relative reproductive rate $r\ge 2$.
Note that the claim applies to any graph $G_N$ and any fixed starting position $v$ of the invader.
If the graph $G_N$ is not a path graph then the inequality is strict from every starting node, implying that the graph is more robust with respect to replacer invasions than the natural baseline.
If the graph $G_N$ is a path graph then a replacer invader appearing at its endpoint has exactly the same fixation probability as the value given by the natural baseline.
However, for $N\ge 3$ there are other possible nodes where the invader could appear, and from there the fixation probability is lower.
Therefore, upon averaging over possible starting positions, we obtain a strict inequality for the path graph too, provided that $N\ge 3$. This is consistent with results plotted in~\cref{fig:all_vertices}, where the path graph $P_7$ maximizes the fixation probability among all graphs with 7 nodes, but is still below the baseline.

The idea behind the proof of the second claim relies on two ingredients.
For the first ingredient, one can show that when $G_N$ is the path graph and the initial replacer invader appears at its endpoint, then its fixation probability is exactly equal to the baseline expression $\frac{1-\frac{1}{r}}{1-\frac{1}{r^N}}$.
In fact, this claim is true not only for $r\ge 2$, but for any $r>0$. 
Second, we show that if $r\ge 2$ then the fixation probability from any initial node of any graph is at most as high as the fixation probability starting from an endpoint of a path graph.
To do this, in essence we run the process until the invaders first reach a moment when their subpopulation has at least two resident neighbors.
This puts them at a difficult spot, because if either of those two resident neighbors is selected for reproduction, one invader is replaced.
Thanks to the $r\ge 2$ assumption, this immediate vulnerability can be showed to outweigh any potential benefits that might come later in the process. See \SI{} for details.

\section{Conclusions}

The evolutionary dynamics in finite spatially structured populations is commonly modeled by variants of the Moran process.
In each step of the Moran process, one individual is reproducing, and the offspring replaces a neighbor.
In the standard Moran process, the reproducing individuals select the neighbor uniformly at random, so sometimes they effectively waste their reproductive turn by replacing an individual of the same type as the parent.
In this work, we consider the recently introduced phenotype called the replacer.
A replacer also replaces a random neighbor, but only from among those who have a different type than the reproducing parent (if  such neighbors are available).
Replacers are therefore more efficient competitors than the standard, oblivious individuals.
Past work studied the evolutionary dynamics between replacers and oblivious individuals~\cite{pecho2025selective}.
Here we study the evolutionary dynamics between replacers and replacers.
Namely, we analyze the fixation probability $\fps(G_N)$ of a single replacer with relative reproductive rate $r$, as it attempts to invade a homogeneous population of established replacers who are arranged at nodes of a graph $G_N$.

We find that established replacers are generally well protected against replacer invasion attempts,
and we quantify the strength of this effect for various spatial structures.
The effect is strongest for well-mixed populations, represented by complete graphs $K_N$, where we derive $\fps(K_N)=\frac{1}{\left(1+\frac1{r}\right)^{N-1}}$.
That is, the invader fixation probability is exponentially small for any fixed $r>0$.
Thus, large populations of replacers are extremely robust with respect to replacer invasions, even when the invader is advantageous ($r>1$).
An analogous result holds for other dense graphs such as island models with a constant number of islands.

At the other extreme, we have sparse graphs such as low-dimensional lattices.
For large cycle graph $C_N$, the fixation probability of an advantageous invader is decreased compared to the oblivious case, but remains constant for any $r>1$ rather than being exponentially small.
Thus, the effect of increased robustness persists, but it is substantially weaker.
For other sparse graphs, the magnitude of the effect depends not only on the degree, but also on the connectivity properties of the underlying graph. Many large 4-regular graphs are reasonably good expanders, and as such they are overwhelmingly likely to withstand invasions attempts of strength up to at least $r=1.9$, whereas for other graphs such as the Square grids $\Sq_N$ or the Bead graphs $B_N$ the same invasion attempts have a decent chance of being successful.

Finally, we present bounds for arbitrary graphs, showing that in the setting with replacers against replacers, the effect of increased robustness is in place for every single spatially structured population of any size, provided that the invader is either near-neutral ($r\approx 1$) or sufficiently strong ($r\ge 2$).

Being of the replacer phenotype generally helps both the invading mutant and the established residents.
However, our results show that the residents benefit more from this power.
This is particularly evident in the early stages of the process.
When only a single invader is present, it does not benefit at all from the extra power, since all its neighbors are anyway of the other type. In contrast, all those resident neighbors benefit greatly from the extra power, since the invader is only one of possibly many neighbors that they have.
Since the early stages of the invasion dynamics tend to be more important than the later stages, this asymmetry results in established populations being more resistant against replacer invasions, as compared with the baseline case of oblivious invaders and oblivious residents.

Our results go beyond this qualitative assessment and quantify the strength of the effect.
We show that high-degree graphs guarantee evolutionary stability against replacer invasions of any strength.
In contrast, for low-degree graphs specific features of the spatial structure become important as they determine whether the invasion attempt of a given strength $r>1$ is reasonably likely to succeed, or whether the established population is stable with respect to invasions of this strength.
A particularly important feature seems to be whether the graph is a good expander or not~\cite{alon1986eigenvalues}.

We also note that the effect of the spatial structure is major.
For neutral invaders ($r=1$) the fixation probability may be as low as $1/2^{N-1}$ (as is the case for the complete graph $K_N$) or as high as $1/(2N-2)$, as for the cycle graph $C_N$. Already for $N=20$ this is a difference of four orders of magnitude (as we have $\fpns(K_{20})<0.0002\%$, whereas $\fpns(C_{20})>2\%$).
This is in stark contrast to the standard case of oblivious invaders and residents, where the fixation probability of any neutral invader is equal to $1/N$ (so $5\%$ for $N=20$).
A similar contrast can be made for the case of advantageous mutants ($r>1$). Take $r=2$ and $N$ large.
In the standard case of oblivious invaders and residents, the Isothermal Theorem~\cite{lieberman2005evolutionary} implies that the fixation probability approaches $50\%$, for any regular graph $G_N$.
In contrast, in our case of the Moran process with replacers, for $r=2$ and $N=100$ we find
$\fps(C_{100})\approx 33\%$ for the cycle graph,
$\fps(\Sq_{100})\approx 3\%$ %
for the square grid, and
$\fps(K_{100})<10^{-10}$ for the complete graph.
By~\cref{thm:arbitrary_main}, all those values are less than the natural baseline of $50\%$, but the values span several orders of magnitude.

We conclude with stating two interesting future directions.
First, our result for arbitrary graphs shows that, compared to the natural baseline, the evolutionary stability of replacers is increased both when $r\approx 1$ and when $r\ge 2$. Is the same true for arbitrary $r$? Our numerical results for small $N$ and our simulation results for large $N$ suggest so.
However, proving the statement might not be easy, since a similarly sounding statement is actually not true when $r\ll 1$. Namely, there exists a (non-regular) graph $G_N$ and $r\ll 1$ such that $\fps(G_N)>\fp(G_N)$, see Remark~2 %
in \SI{}.
That is, when comparing the evolutionary dynamics of replacers to the evolutionary dynamics of oblivious individuals on the same spatial structure (rather than comparing it to the baseline given by any isothermal structure), on some graphs the evolutionary stability of replacers might actually slightly decrease when $r\ll 1$.
However, in that case both fixation probabilities are extremely small.

Second, fixation probability is just one representative quantity of the evolutionary dynamics.
Other relevant quantities include fixation and diversity times~\cite{broom2010evolutionary,tkadlec2019population,svoboda2023coexistence,bulmer1972multiple,yeaman2011establishment}, or the properties of the mutation-selection equilibrium~\cite{sharma2022suppressors,yagoobi2018mutation}.
In the original Moran process on arbitrary undirected graphs with $r>1$, those quantities are all relatively mild.
In particular, the fixation probability is always at least inversely proportional to the population size $N$,
the fixation time is at most quartic~\cite{brewster2025maintaining},
and there is a tradeoff between fixation probability and fixation time~\cite{tkadlec2021fast}.
In the Moran process with replacers, the fixation probability on some undirected spatial structures is exponentially small, so it is conceivable that on some structures the fixation time is exponentially long. Identifying such structures would be interesting, especially if they also lead to long coexistence and diversity times.

\section*{Code availability}
The data and code are available at \url{https://github.com/mpecho19/moran-process-replacers}.

\section*{Acknowledgments}
J.T. and M.P. were supported by Charles Univ.\ projects UNCE 24/SCI/008 and PRIMUS 24/SCI/012.
J.T. was supported by GA\v{C}R grant 25-17377S.
K.C. was partially supported by the Austrian Science Fund (FWF) 10.55776/COE12, and by the ERC CoG 863818 (ForM-SMArt) grant.

\bibliographystyle{naturemag}

\newpage
\setcounter{thm}{0}
\setcounter{section}{0}
\section*{Appendix}\label{sec:organization}

This is the Appendix for the paper \textit{\maintitle}.
It contains formal proofs of the theorems listed in the main text.

\section{Model}\label{sec:model}

Let $G=(V,E)$ be a finite, simple, connected, undirected graph with
$N=|V|$. Each vertex is occupied by one individual, which is either a
resident or a mutant. Residents have fitness $1$, while mutants have fitness $r>0$.
A state of the population is denoted by a set
$S\subseteq V$ of vertices occupied by mutants. Thus, $S=\varnothing$
corresponds to mutant extinction and $S=V$ to mutant fixation.

We consider discrete-time birth--death updating. In a state $S$, the total
population fitness is
\[
    F(S)=N-|S|+r|S|.
\]
An individual in $v$ is selected for reproduction with probability $\frac{r}{F(S)}$ if it is a mutant and with probability $\frac{1}{F(S)}$ if it is a resident.
The offspring inherits the type of its parent and replaces an individual at a neighbouring vertex.
We define the original process and a new process with neighborhood-aware individuals, called replacers.

\begin{defn}[Moran process, $\mor$]\label{def:moran}
In the standard Moran process with birth--death updating, denoted by $\mor$,
the reproducing vertex $v$ selects a vertex uniformly at random from its
neighbourhood
\[
    N(v)=\{u\in V:\{u,v\}\in E\}.
\]
The selected neighbour is then replaced by an offspring of $v$.
In particular, the update has no effect when the selected neighbour already has the same type
as $v$.
\end{defn}

\begin{defn}[Moran process with replacers, $\sen$]\label{def:replacers}
In the Moran process with replacers, denoted by $\sen$, the reproducing vertex
$v$ selects uniformly at random among its neighbours of the opposite type. For
a state $S$, define
\[
    N_{\mathrm{opp}}(v,S)
    =
    \begin{cases}
        N(v)\setminus S, & v\in S,\\
        N(v)\cap S,      & v\notin S.
    \end{cases}
\]
If $N_{\mathrm{opp}}(v,S)\neq\varnothing$, a vertex in this set is selected
uniformly and replaced by an offspring of $v$. If
$N_{\mathrm{opp}}(v,S)=\varnothing$, no replacement occurs and the state is
unchanged.
\end{defn}

Thus, unlike in $\mor$, a reproducing individual in $\sen$ is aware of the
types in its neighborhood and, whenever possible, replaces an individual of
the opposite type. This rule applies symmetrically to residents and mutants.

Because $G$ is connected, the only absorbing states of either process are
$S=\varnothing$ and $S=V$. For $\mathcal P\in\{\mor,\sen\}$, define the so-called \textit{fixation probability}
\[
    \rho_r^{\mathcal P}(G;S)
    =
    \Pr^{\mathcal P}\!\left(
        \text{the process reaches }V
        \,\middle|\,
        S_0=S
    \right).
\]
For a state with a single mutant in a vertex $v$ we write 
\[\rho_r^{\mathcal P}(G;v) = \rho_r^{\mathcal P}(G;\{v\}). \]
Unless stated otherwise, the process starts with a single mutant placed
uniformly at random. We therefore write
\[
    \fp(G)
    =
    \frac{1}{N}\sum_{v\in V}
    \rho^{\mor}(G;\{v\}),
    \qquad
    \fps(G)
    =
    \frac{1}{N}\sum_{v\in V}
    \rho^{\sen}(G;\{v\}).
\]

\paragraph{Well-mixed baseline.}
To compare the ability of residents to withstand the invasion of mutants in the replacers regime, we compare the fixation probability of mutants with the fixation probability in a well-mixed population (denoted $K_N$) under
the standard Moran process~\cite{moran1958random}.
The fixation probability from a single mutant is
\begin{equation}\label{eq:moran-baseline}
    \fp(K_N)
    =
    \begin{cases}
        \displaystyle
        \frac{1-r^{-1}}{1-r^{-N}},
        & r\neq 1,\\[10pt]
        \displaystyle
        \frac{1}{N},
        & r=1.
    \end{cases}
\end{equation}

\begin{defn}[Backward bias]\label{def:gamma}
Let $p_k$ be the probability transition probability from state $k$ to state $k+1$ and 
$q_k$ the probability transition probability from state $k$ to state $k-1$. We define
\[
\gamma_k = \frac{q_k}{p_k}.
\]
\end{defn}

The following lemma is standard, see e.g.~\cite[Eqn 6.12]{nowak2007evolutionary}.
\begin{lemma}[One-dimensional random walk] %
\label{lem:one-dimensional} 
Consider a one-dimensional random walk with states $\{0,1,\dots,N\}$ and absorbing boundaries at $0$ and $N$. 
Given $\gamma_1,\dots,\gamma_{N-1}$ as in Definition~\ref{def:gamma}, the probability of absorption at $N$ 
starting from state $i$ is
\[
\rho = \frac{1+\sum_{j=1}^{i-1}\prod_{k=1}^{j}\gamma_k}{1 + \sum_{j=1}^{N-1}\prod_{k=1}^{j}\gamma_k}.
\]
\end{lemma}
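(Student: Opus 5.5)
The plan is first-step analysis followed by a telescoping of successive differences. Write $x_i$ for the probability of absorption at $N$ when starting from state $i$, so that $x_0=0$ and $x_N=1$. We assume, as is implicit in Definition~\ref{def:gamma}, that $p_k>0$ for every $1\le k\le N-1$, so that each $\gamma_k$ is well defined. The walk may also stay put: from state $i$ it remains at $i$ with probability $1-p_i-q_i$.

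\textbf{Step 1: the walk is absorbed almost surely.} From any interior state $i$, the walk reaches $N$ in at most $N-i$ steps with probability at least $\prod_{k=i}^{N-1}p_k>0$. So there is a uniform $\delta>0$ such that, from any interior state, the walk is absorbed within $N$ steps with probability at least $\delta$. Hence the probability of never being absorbed is at most $(1-\delta)^m$ for every $m$, which is zero. This makes the $x_i$ the unique answer to the absorption question and justifies conditioning on the first step.

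\textbf{Step 2: first-step equations.} For $1\le i\le N-1$, conditioning on the first step gives
\[
x_i=p_i x_{i+1}+q_i x_{i-1}+(1-p_i-q_i)x_i .
\]
Rearranging, this is
\[
p_i(x_{i+1}-x_i)=q_i(x_i-x_{i-1}).
\]

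\textbf{Step 3: solve for the differences.} Set $d_i=x_i-x_{i-1}$ for $1\le i\le N$. Dividing the equation in Step~2 by $p_i>0$ gives $d_{i+1}=\gamma_i d_i$. By induction,
\[
d_{j+1}=d_1\prod_{k=1}^{j}\gamma_k \quad\text{for } 0\le j\le N-1,
\]
with the empty product equal to $1$.

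\textbf{Step 4: telescope and normalise.} Since $x_0=0$, we have $x_i=\sum_{m=1}^{i}d_m$, which gives
\[
x_i=d_1\Bigl(1+\sum_{j=1}^{i-1}\prod_{k=1}^{j}\gamma_k\Bigr).
\]
Setting $i=N$ and using $x_N=1$ determines
\[
d_1=\Bigl(1+\sum_{j=1}^{N-1}\prod_{k=1}^{j}\gamma_k\Bigr)^{-1}.
\]
Substituting this back into the expression for $x_i$ gives exactly the claimed formula for $\rho=x_i$.

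The only point requiring care is Step~1. It is needed for two things: the first-step equations must describe the absorption probabilities, and the solution of the linear recursion must be unique. Both follow once $p_k>0$ and the walk is absorbed almost surely; the rest is routine algebra.
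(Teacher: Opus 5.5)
Your proof is correct. It is the standard first-step/telescoping derivation behind the result that the paper does not prove itself but simply cites from \cite{nowak2007evolutionary}. One minor point: Step~1 is not actually needed. Hitting probabilities satisfy the first-step equations by the Markov property alone, and $p_i>0$ already gives $d_{i+1}=\gamma_i d_i$, so the boundary conditions $x_0=0$, $x_N=1$ determine the solution uniquely.
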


In this paper we mainly use \cref{lem:one-dimensional} when working with symmetric graphs with states representing different number of mutants and the transition probabilities $p_k$ and $q_k$ representing the probability of increasing and decreasing the number of mutants. We also use the following notation.

\begin{defn}
    We define $\gamma(S)$ as the ratio of the probability of losing a mutant to the probability of gaining a mutant in a state with set of mutants $S$. 
\end{defn}

\begin{defn}
    Let $G$ be a graph and $S$ is a set of mutants in a given state, then we call an individual $u$ active, if there exist an individual $v$ of the opposite type, such that there is an edge between $u$ and $v$ in $G$. We define $\AM(S)$ as the number of active mutants and $\AR(S)$ as the number of active residents.
\end{defn}

\begin{rem}
    When an active individual is reproducing in $\sen$, it replaces some individual of the opposite type with probability $1$, therefore given a state with set of mutants $S$ we can compute

$$\gamma(S) =\frac{\frac{\AR(S)}{|S|\cdot r + (N-|S|)}}{ \frac{r\cdot\AM(S)}{|S|\cdot r + (N-|S|)}}= \frac{1}{r}\cdot \frac{\AR(S)}{\AM(S)}.$$

\end{rem} 
\section{Fixation probability in a well-mixed population}

The first question about the process is the fixation probability of invading mutants of any fitness in a well-mixed population.
We show the fixation probability for all starting mutant sizes and all $r>0$ and then use it to show the fixation probability of a single mutant. Then we show that if the minimal degree is high, then the fixation probability is exponentially small in the size of the degree.

\begin{lemma}\label{lem:ss_exp_small}
    Let $r>0$. For a complete graph $K_N$, the fixation probability of initial mutants occupying set $S$ with size $|S|=i$ satisfies
    \[
        \fps(K_N;S) = \frac{1 + \sum_{j=1}^{i-1}\binom{N-1}{j}\cdot\frac{1}{r^j}}{(1+ \frac1r)^{N-1}}= \mathbb{P}(X\leq i-1),
    \]
    where $X\sim \operatorname{Binomial}(N-1, \frac{1}{r+1})$.
\end{lemma}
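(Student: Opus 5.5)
On $K_N$ the state is determined by the number $k=|S|$ of mutants, so I will reduce the process to a birth--death chain on $\{0,1,\dots,N\}$ and apply \cref{lem:one-dimensional}.

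\textbf{Step 1: the bias.} For $1\le k\le N-1$, every individual in $K_N$ is adjacent to all vertices of the opposite type, so every individual is active. Hence $\AM(S)=k$ and $\AR(S)=N-k$, and by the remark after the definition of active individuals,
\[
\gamma_k=\frac{N-k}{k\,r}.
\]
Self-loops (steps where nothing changes) do not affect absorption probabilities, so \cref{lem:one-dimensional} applies directly.

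\textbf{Step 2: the products.} The products telescope into binomial coefficients:
\[
\prod_{k=1}^{j}\gamma_k=\frac{(N-1)(N-2)\cdots(N-j)}{j!\,r^j}=\binom{N-1}{j}\frac{1}{r^j}.
\]
Therefore the denominator in \cref{lem:one-dimensional} is
\[
1+\sum_{j=1}^{N-1}\binom{N-1}{j}r^{-j}=\left(1+\tfrac1r\right)^{N-1}
\]
by the binomial theorem. The numerator, for a start at $i$, is $1+\sum_{j=1}^{i-1}\binom{N-1}{j}r^{-j}$. This gives the first equality.

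\textbf{Step 3: the binomial form.} Write $p=\frac{1}{r+1}$, so that $1-p=\frac{r}{r+1}$ and $p/(1-p)=1/r$. Then
\[
\Pr(X=j)=\binom{N-1}{j}p^j(1-p)^{N-1-j}=\binom{N-1}{j}r^{-j}(1-p)^{N-1}.
\]
Since $(1-p)^{N-1}=\left(\frac{r}{r+1}\right)^{N-1}=\left(1+\frac1r\right)^{-(N-1)}$, each term in the ratio equals $\Pr(X=j)$. Summing over $j=0,\dots,i-1$ gives $\Pr(X\le i-1)$.

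There is no real obstacle. The only points needing care are the telescoping of the product into $\binom{N-1}{j}$ and the handling of the edge cases: for $i=1$ the sum is empty and the formula gives $(1+1/r)^{-(N-1)}$; for $i=N$ it gives $\Pr(X\le N-1)=1$, as it should.
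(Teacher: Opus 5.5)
Your proposal is correct and follows the paper's proof essentially step for step: the bias $\gamma_k=\frac{N-k}{kr}$ (you get it via the $\AM/\AR$ remark, the paper computes $p_k$ and $q_k$ directly), the telescoping of $\prod_{k=1}^{j}\gamma_k$ into $\binom{N-1}{j}r^{-j}$, the binomial theorem for the denominator in \cref{lem:one-dimensional}, and the rewriting as a binomial tail. Your Step~3, which factors out $(1-p)^{N-1}$ using $p/(1-p)=1/r$, is just a tidier version of the paper's algebra for the last equality.
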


\begin{proof}
    On a complete graph, the probability of gaining a mutant is the probability of choosing one of the $k$ mutants for reproduction and then choosing a resident.
    The probability of choosing a mutant for reproduction is $\frac{r\cdot k}{T_k}$ and as mutants have the sensing ability, they will choose a resident with probability $1$. 
    Hence, $p_k = r\cdot \frac{k}{T_k}\cdot 1$. 
    
    Similarly, the probability of losing a mutant is the probability of choosing one of the $N-k$ residents for reproduction and then choosing one of the $k$ mutants. The probability of choosing a resident for reproduction is $\frac{N-k}{T_k}$ and as residents have the sensing ability, they will choose a mutant with probability $1$. Hence, $q_k = \frac{N-k}{T_k}\cdot 1$.
    
    Then $\gamma_k = \frac{q_k}{p_k} = \frac{N-k}{k\cdot r}$.
    We use \Cref{lem:one-dimensional} to compute the fixation probability. We have that 
    \[
        \prod_{k=1}^{j} \gamma_k = \frac{(N-1)\cdot (N-2) \cdots (N-j)}{j!\cdot r^j}= \frac{(N-1)!}{j!(N-j-1)!}\cdot\frac{1}{r^j} =  \binom{N-1}{j}\cdot\frac1{r^j}. 
    \] 

    Therefore, the fixation probability for $i$ starting mutants in set $S$ is equal to
    
    \[\fps(K_N;S) = \frac{1 +\sum_{j=1}^{i-1}\binom{N-1}{j}\cdot\frac{1}{r^j}}{1 + \sum_{j=1}^{N-1}\binom{N-1}{j}\cdot\frac{1}{r^j}} = \frac{1 + \sum_{j=1}^{i-1}\binom{N-1}{j}\cdot\frac{1}{r^j}}{(1+ \frac1r)^{N-1}}, \]
    where we used the binomial expansion. 

    Next, we have that 
    \[
    \fps(K_N;S) = \frac{\sum_{j=0}^{i-1} \binom{N-1}{j}r^{-j}}{(1+\frac1r)^{N-1}}= \frac{\sum_{j=0}^{i-1} \binom{N-1}{j}\frac{1}{(1+r)^j}\frac{r^{-j}}{(1+r)^{-j}}}{(\frac{r+1}r)^{N-1}}=\frac{\sum_{j=0}^{i-1} \binom{N-1}{j}\frac{1}{(1+r)^j}(1-\frac{1}{1+r})^{-j}}{(1-\frac{1}{1+r})^{-(N-1)}}\]
    \[=\sum_{j=0}^{i-1}\binom{N-1}{j}\left(\frac{1}{1+r}\right)^j \left(1-\frac{1}{1+r}\right)^{N-1-j}= \mathbb{P}(X\leq i-1),
    \]
    where $X \sim \operatorname{Binomial}(N-1, \frac{1}{r+1})$.
\end{proof}

\begin{corollary}
    Let $r>0$ and $\varepsilon >0$. For a complete graph $K_N$, the fixation probability of initial mutants occupying set $S$ with size $|S|=i < N(\frac{1}{1+r} - \varepsilon)$ satisfies

    \[
    \fps(K_N;S) \leq \exp(-2(N-1)\varepsilon^2).
    \]
\end{corollary}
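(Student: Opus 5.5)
The plan is to read the bound off the binomial representation in Lemma~\ref{lem:ss_exp_small} and then apply Hoeffding's inequality. By that lemma, $\fps(K_N;S)=\Pr(X\le i-1)$ with $X\sim\operatorname{Binomial}(N-1,\frac{1}{1+r})$. Hence $X=\sum_{m=1}^{N-1}Y_m$ is a sum of $N-1$ independent Bernoulli variables with values in $[0,1]$, and its mean is $\mu=\frac{N-1}{1+r}$.

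The next step is to express the event $\{X\le i-1\}$ as a lower-tail deviation of at least $(N-1)\varepsilon$ below $\mu$. From $i<N(\frac{1}{1+r}-\varepsilon)$ we get
\[
i-1 < \frac{N}{1+r}-N\varepsilon-1 = \mu + \frac{1}{1+r} - N\varepsilon - 1 \le \mu-(N-1)\varepsilon + \Bigl(\frac{1}{1+r}-1+\varepsilon\Bigr).
\]
The final bracket is nonpositive whenever $\varepsilon\le \frac{r}{1+r}$. In the opposite case $\varepsilon>\frac{r}{1+r}$, the hypothesis gives $i<N(\frac1{1+r}-\varepsilon)<0$, which is impossible, so the statement holds vacuously. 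In all relevant cases we therefore have $i-1\le \mu-(N-1)\varepsilon$.

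Hoeffding's inequality for the lower tail then gives
\[
\Pr\bigl(X\le \mu-(N-1)\varepsilon\bigr)\le \exp\!\Bigl(-\tfrac{2((N-1)\varepsilon)^2}{N-1}\Bigr)=\exp(-2(N-1)\varepsilon^2),
\]
which is the claimed bound. The only delicate point is the bookkeeping that turns the hypothesis, stated with $N$, into a deviation stated with $N-1$, and that is exactly what the inequality chain above handles. Everything else is a direct citation of the lemma followed by a standard concentration bound.
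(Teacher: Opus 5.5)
Your route is the same as the paper's: the binomial representation from Lemma~\ref{lem:ss_exp_small}, then Hoeffding's lower-tail bound with $n=N-1$ and deviation $(N-1)\varepsilon$. However, the bookkeeping step you single out as the delicate point is done incorrectly. Expanding exactly,
\[
\frac{N}{1+r}-N\varepsilon-1=\mu-(N-1)\varepsilon+\Bigl(\frac{1}{1+r}-1-\varepsilon\Bigr),
\]
so the bracket carries $-\varepsilon$, not $+\varepsilon$. Your version is a valid upper bound, but it loses $2\varepsilon$ for no reason, and that loss is what forces your case split. The case split then breaks, because $\varepsilon>\frac{r}{1+r}$ implies $\frac{1}{1+r}-\varepsilon<0$ only when $r\ge 1$. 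For $r<1$ and $\frac{r}{1+r}<\varepsilon<\frac{1}{1+r}$, the hypothesis is satisfiable and your bracket is positive. For example, $r=\frac13$, $\varepsilon=\frac12$, $N=8$, $i=1$ gives $i<N/4=2$, while your bracket equals $\frac14$. So in that range your chain does not yield $i-1\le\mu-(N-1)\varepsilon$, and the vacuity claim is false.

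The repair is immediate. With the correct bracket $\frac{1}{1+r}-1-\varepsilon$, which is negative for every $r>0$ and $\varepsilon>0$, you get $i-1<(N-1)\bigl(\frac{1}{1+r}-\varepsilon\bigr)$ unconditionally, with no case analysis. This is exactly the (unstated) inequality the paper relies on when it writes $\Pr(X\le i-1)\le\Pr\bigl(X\le(N-1)(\frac{1}{1+r}-\varepsilon)\bigr)$ before applying Hoeffding. After that, your final display matches the paper's conclusion.
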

\begin{proof}
From \cref{lem:ss_exp_small} we have that 
\[\fps(K_N) = \mathbb{P}(X\leq i-1),
\]
where $X\sim \text{Binomial}(N-1, \frac{1}{r+1})$. We can write $X = \sum_{k=1}^{N-1}Y_k$, where $Y_k$ are random variables following Bernoulli distribution, $\mathbb{P}(Y_k =1) = \frac{1}{r+1}$ and $\mathbb{P}(Y_k= 0) = 1-\frac{1}{1+r}$, with mean $\frac{1}{r+1}$. Then we have that 

\[
 \mathbb{P}(X\leq i-1) \leq \mathbb{P}\left(X\leq (N-1)\cdot \left(\frac{1}{1+r} - \varepsilon\right)\right) = \mathbb{P}\left(\frac{1}{1+r} -\frac{\sum_{k=1}^{N-1} Y_k}{N-1} \geq \varepsilon\right) = \mathbb{P}\left(\frac{\sum_{k=1}^{N-1}-Y_k}{N-1} - (-\frac{1}{1+r}) \geq \varepsilon\right).
\]

We have that $-1 \leq -Y_k \leq 0$ and the mean of $-Y_k$ is equal to $-\frac{1}{1+r}$, therefore we can use Hoeffding's inequality \cite[Theorem 2]{Hoeffding1963}
\[
\fps(K_N) = \mathbb{P}(X\leq i-1) \leq \mathbb{P}\left(\frac{\sum_{k=1}^{N-1}-Y_k}{N-1} - \left(-\frac{1}{1+r}\right) \geq \varepsilon\right) \leq \exp\left(-\frac{2(N-1)^2\varepsilon^2}{(N-1)\cdot 1^2}\right) = \exp\left(-{2(N-1)\varepsilon^2}\right).
\]
\end{proof}

By plugging in $i=1$ we obtain the claim stated in the main text.
\begin{corollary}\label{cor:one_mut_well_mixed}
    For $K_N$ and $r > 0$, we have
    \[\fps(K_N)= \frac{1}{(1+ \frac1r)^{N-1}}.\]
\end{corollary}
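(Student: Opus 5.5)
This is the special case $i=1$ of \cref{lem:ss_exp_small}, and I would prove it by plugging in $|S|=1$. The sum $\sum_{j=1}^{i-1}\binom{N-1}{j}r^{-j}$ in the numerator is then empty, so the numerator equals $1$. The denominator $(1+\frac1r)^{N-1}$ does not depend on $i$. This gives $\fps(K_N;\{v\}) = (1+\frac1r)^{N-1}$ raised to the power $-1$ for every vertex $v$.

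Next I would pass from a fixed vertex to the uniformly random starting vertex used in the definition of $\fps(K_N)$. The complete graph is vertex-transitive, and indeed the formula in \cref{lem:ss_exp_small} depends only on $|S|$. So all $N$ terms in $\frac1N\sum_{v}\fps(K_N;\{v\})$ coincide, and the average equals the common value.

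As a cross-check I would use the binomial form of \cref{lem:ss_exp_small}. With $X\sim\operatorname{Binomial}(N-1,\frac1{r+1})$ we have $\Pr(X\le 0)=\Pr(X=0)=\left(\frac{r}{r+1}\right)^{N-1}$, which equals $(1+\frac1r)^{-(N-1)}$, the same value.

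I expect no real obstacle, since all the work is in \cref{lem:ss_exp_small}. The only points needing care are the empty-sum convention at $i=1$ and the averaging over starting vertices.
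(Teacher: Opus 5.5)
Your proposal is correct and matches the paper, which obtains this corollary simply by plugging $i=1$ into \cref{lem:ss_exp_small}. Your remarks on the empty sum, the averaging over the equivalent starting vertices, and the binomial cross-check only make explicit what the paper leaves implicit.
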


\begin{lemma}\label{lem:min_degree}
    Let $G_N$ be a graph with $N$ vertices and minimum degree $D$. Then the fixation probability satisfies
    \[\fps(G_N) \leq \frac{1}{(1+\frac1r)^D}.\]
\end{lemma}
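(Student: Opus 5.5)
The plan is to compare the number of mutants $|S_t|$ with the one-dimensional chain on the complete graph $K_{D+1}$, up to the first time the mutant set reaches size $D+1$. Since every vertex has degree at least $D$, we have $N\ge D+1$. Fixation therefore requires that $|S_t|$ first climbs from $1$ to $D+1$ before hitting $0$. So it suffices to bound, from every starting vertex $v$, the probability that $|S_t|$ reaches $D+1$ before $0$ by $(1+\frac1r)^{-D}$, and then average over $v$. (If $N=D+1$ the graph is $K_{D+1}$ and the claim is exactly \cref{cor:one_mut_well_mixed}.)

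\emph{Step 1 (local bias).} Take any state $S$ with $1\le |S|=k\le D$. Each mutant $u\in S$ has at least $D-(k-1)\ge 1$ neighbours outside $S$. Hence $\AR(S)\ge D-k+1$, while trivially $\AM(S)\le k$. By the Remark on $\gamma(S)$,
\[
\gamma(S)=\frac1r\cdot\frac{\AR(S)}{\AM(S)}\ \ge\ \frac{D+1-k}{k\,r}=\hat\gamma_k .
\]
Here $\hat\gamma_k$ is exactly the backward bias of the complete graph $K_{D+1}$ computed in the proof of \cref{lem:ss_exp_small}. In words: conditional on the mutant count changing, the probability that it goes up is at most $\hat p_k:=1/(1+\hat\gamma_k)$.

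\emph{Step 2 (supermartingale comparison).} Let $h(k)$, $k\in\{0,\dots,D+1\}$, be the probability that the $K_{D+1}$ chain started at $k$ is absorbed at $D+1$. By \cref{lem:ss_exp_small} (with $N$ replaced by $D+1$), $h(1)=(1+\frac1r)^{-D}$. Moreover $h$ is strictly increasing and harmonic: $h(k)=\hat p_k h(k+1)+(1-\hat p_k)h(k-1)$. Consider the process $h(|S_t|)$, stopped when $|S_t|\in\{0,D+1\}$. In one step the count either stays the same or moves by $\pm1$. Given that it moves, the up-probability is at most $\hat p_k$, and $h(k+1)-h(k)>0$. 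Hence
\[
\mathbb E\bigl[h(|S_{t+1}|)\mid S_t\bigr]\le h(|S_t|),
\]
so $h(|S_t|)$ is a bounded supermartingale. The stopping time is almost surely finite, because the process on a finite connected graph is absorbed with probability one. Optional stopping then gives
\[
\Pr\bigl(|S_t| \text{ hits } D+1 \text{ before } 0\bigr)\le h(1)=\Bigl(1+\tfrac1r\Bigr)^{-D}.
\]

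\emph{Step 3.} Fixation from $v$ implies hitting level $D+1$ first, so $\fps(G_N;v)\le(1+\frac1r)^{-D}$ for every $v$. Averaging over $v$ gives the lemma.

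The main obstacle is Step 2: $|S_t|$ is not itself Markov, so the comparison must be done through the potential $h$ rather than by identifying the chains. I expect the monotonicity of $h$ together with the pointwise bound from Step 1 to handle this directly. One also has to check that Step 1 needs $k\le D$ only, which is exactly the range visited before stopping.
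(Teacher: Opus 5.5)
Your proof is correct and follows the same route as the paper: the same local bound $\AR(S)\ge D-k+1$, $\AM(S)\le k$ giving $\gamma(S)\ge \frac{D-k+1}{kr}$, followed by comparison with a one-dimensional birth--death chain whose absorption probability from $1$ is $(1+\frac1r)^{-D}$, i.e.\ the $K_{D+1}$ chain. The paper simply plugs these lower-bounded biases into \cref{lem:one-dimensional}, truncating the sum at $j=D$, without arguing why the non-Markov mutant count may be compared this way. Your stopped supermartingale $h(|S_t|)$, halted at level $D+1$, supplies exactly that missing justification.
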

\begin{myproof}
    Let $M$ be a set of mutants of size $k$. Then we have that $\AM(M) \leq k$. Next, we look at a single mutant. The mutant has at least $D$ neighbors and at most $k-1$ of them are mutants. Therefore, we have that $\AR(M) \geq D-(k-1)$. Hence we can bound the backward bias 

    $$\gamma(M) = \frac1r \cdot \frac{\AR(M)}{\AM(M)} \geq \frac{D-k+1}{r\cdot k}.$$

    We use \cref{lem:one-dimensional} to bound the fixation probability 

    $$\fps(G_N) \leq \frac{1}{ 1+\sum_{j=1}^{D}\prod_{k=1}^{j} \frac{D-k+1}{k}\cdot \frac{1}{r}}=\frac{1}{1+ \sum_{j=1}^{D}\frac{D!}{j!\cdot (D-j)!}\cdot \frac{1}{r^j}}=\frac{1}{\sum_{j=0}^D\binom{D}{j}\frac{1}{r^j}}=\frac{1}{(1+\frac{1}{r})^D}.$$
\end{myproof}

Combining \cref{cor:one_mut_well_mixed} and \cref{lem:min_degree} we get the following theorem.

\begin{thm}\label{thm:1}
\leavevmode
\begin{enumerate}
    \item (Complete graph) Let $K_N$ be a complete graph and $r>0$. Then
\[
\fps(K_N) = \frac{1}{\left(1+\frac1r\right)^{N-1}}.
\]
\item (High-degree graphs) Let $G_N$ be any graph with minimum degree $D$. Then
\[
\fps(G_N) \leq \frac{1}{(1+\frac1r)^D}.
\]
\end{enumerate}
\end{thm}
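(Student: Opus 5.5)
The plan is to read Theorem~\ref{thm:1} as the conjunction of two results already established in this appendix, and to check that each hypothesis matches.

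For the first part, I would invoke \cref{cor:one_mut_well_mixed}, which is \cref{lem:ss_exp_small} specialised to $i=1$. In that lemma the numerator $1+\sum_{j=1}^{i-1}\binom{N-1}{j}r^{-j}$ collapses to $1$ because the sum is empty. Also, on $K_N$ every vertex is equivalent, so $\fps(K_N;\{v\})$ does not depend on $v$. Averaging over the uniformly random starting vertex therefore gives $\fps(K_N)=(1+\frac1r)^{-(N-1)}$. The only things to verify are the backward bias $\gamma_k=\frac{N-k}{kr}$ and the binomial identity $\sum_{j=0}^{N-1}\binom{N-1}{j}r^{-j}=(1+\frac1r)^{N-1}$, and both are already done in the lemma.

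For the second part, I would invoke \cref{lem:min_degree}. The argument bounds the backward bias from below in every state with $k\le D$ mutants. Trivially $\AM\le k$. Also, any single mutant has at least $D-(k-1)$ resident neighbours, and each of these is an active resident, so $\AR\ge D-k+1$. Hence $\gamma(S)\ge \frac{D-k+1}{rk}$.

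This is the step I expect to be the main obstacle. The process on a general graph is not a one-dimensional chain, so \cref{lem:one-dimensional} cannot be applied directly. I would handle it by projecting onto the mutant count $k$. The bound shows that, conditioned on the count changing, the count decreases with probability at least $\frac{\gamma_k}{1+\gamma_k}$, where $\gamma_k=\frac{D-k+1}{rk}$.

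I would then couple this projection with a birth–death chain on $\{0,\dots,D+1\}$ having exactly these biases and treating $D+1$ as absorbing "success". The bias bound makes the projected walk stochastically dominated by that chain, so the probability that it reaches $D+1$ before $0$ is at least the true fixation probability. Reaching $D+1$ mutants is necessary for fixation, and $N\ge D+1$.

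\Cref{lem:one-dimensional} then gives the chain's success probability as
\[
\frac{1}{1+\sum_{j=1}^{D}\prod_{k=1}^{j}\gamma_k}=\frac{1}{\sum_{j=0}^{D}\binom{D}{j}r^{-j}}=\frac{1}{(1+\frac1r)^D}.
\]
This bound holds for every starting vertex, so averaging preserves it. Combining the two parts completes the theorem.
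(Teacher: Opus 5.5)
Your proposal is correct and matches the paper's proof. Part 1 is \cref{cor:one_mut_well_mixed}, i.e.\ \cref{lem:ss_exp_small} with $i=1$, and Part 2 is \cref{lem:min_degree}, using $\AM\le k$, $\AR\ge D-k+1$ and \cref{lem:one-dimensional}. Your explicit domination of the non-Markov mutant count by a birth--death chain on $\{0,\dots,D+1\}$, absorbed at $D+1$, supplies a justification the paper leaves implicit when it applies \cref{lem:one-dimensional} directly.
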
 
\section{Fixation probability in a cycle}

In this section, we derive explicit formulas in a simple one-dimensional structure represented by a cycle graph $C_N$.
We derive the fixation probability for $i$ initial mutants forming a contiguous block and for all $r>0$.
Then, we use it to show the fixation probability of a single mutant. 
Finally, we show the asymptotic behavior of the formula for large $N$. 

\begin{lemma}\label{lem:cycle-ss}
    Let $r>0$ and $N \ge 3$. For a cycle $C_N$, the fixation probability of initial mutants forming a block $S$ of size $i <N$ satisfies
    \[ \fps (C_N; S) =\begin{cases}
     \frac{2i -1}{2N- 2} \text{\ \ if $r = 1$,}\\
     \\
     \frac{1 + \frac{1}{r} - \frac{2}{r^i}}{1+\frac1r -\frac{1}{r^{N-1}} - \frac{1}{r^{N}}} \text{\ \ if $r\not = 1$.}
 \end{cases}\]
\end{lemma}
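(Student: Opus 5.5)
The approach is to reduce the process on $C_N$ to a one-dimensional birth--death chain on the number $k$ of mutants, and then apply \cref{lem:one-dimensional}.

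First, we check that contiguity is preserved. Starting from a block (an arc) $S$ of size $1\le k\le N-1$, any reproduction changes the state by one of four moves: a boundary mutant converts its outer resident neighbour, or a boundary resident converts its adjacent mutant. Either way the new set is again an arc (or $\varnothing$ or $V$). So the state is determined, up to rotation, by $k$. By symmetry of the cycle, $\fps(C_N;S)$ depends only on $|S|=i$.

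Next, we compute the backward biases using the Remark $\gamma(S)=\frac1r\cdot\frac{\AR(S)}{\AM(S)}$.
\begin{itemize}
\item For $k=1$: there is $1$ active mutant and $2$ active residents (its two neighbours, distinct since $N\ge3$), so $\gamma_1=\frac2r$.
\item For $2\le k\le N-2$: the arc has two distinct endpoint mutants and two distinct outside neighbours, so $\AM=\AR=2$ and $\gamma_k=\frac1r$.
\item For $k=N-1$: there are $2$ active mutants and a single active resident, so $\gamma_{N-1}=\frac{1}{2r}$.
\end{itemize}
The case $N=3$ is covered because there $k\in\{1,2\}$ and only the two boundary cases occur. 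The edge cases $k=1$ and $k=N-1$ are the only subtle point. The factors $2$ and $\frac12$ must be tracked carefully, because they are exactly what produces the final shape of the formula.

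Finally, we substitute into \cref{lem:one-dimensional}. The products are
\[
\prod_{k=1}^{j}\gamma_k=\frac{2}{r^j}\quad\text{for }1\le j\le N-2,
\qquad
\prod_{k=1}^{N-1}\gamma_k=\frac{1}{r^{N-1}}.
\]
Hence the numerator is $1+2\sum_{j=1}^{i-1}r^{-j}$ and the denominator is $1+2\sum_{j=1}^{N-2}r^{-j}+r^{-(N-1)}$.
\begin{itemize}
\item For $r=1$: the numerator is $2i-1$ and the denominator is $1+2(N-2)+1=2N-2$, giving $\frac{2i-1}{2N-2}$.
\item For $r\ne1$: sum the geometric series as $\sum_{j=1}^{m}r^{-j}=\frac{1-r^{-m}}{r-1}$. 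Multiplying numerator and denominator by $(1-\frac1r)$, the numerator becomes
\[
1-\frac1r+\frac2r\Bigl(1-\frac1{r^{i-1}}\Bigr)=1+\frac1r-\frac{2}{r^i},
\]
and the denominator becomes
\[
1+\frac1r-\frac{2}{r^{N-1}}+\frac{1}{r^{N-1}}-\frac1{r^N}=1+\frac1r-\frac1{r^{N-1}}-\frac1{r^N},
\]
which matches the claimed formula.
\end{itemize}
This step is routine algebra.
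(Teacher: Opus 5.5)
Your proposal is correct and follows essentially the same route as the paper: you reduce the dynamics to a birth--death chain on the block size with $\gamma_1=\frac2r$, $\gamma_k=\frac1r$ for $2\le k\le N-2$, and $\gamma_{N-1}=\frac1{2r}$, then apply \cref{lem:one-dimensional} and sum the geometric series separately for $r=1$ and $r\ne1$. Your algebra checks out, including the boundary case $N=3$ and the case $i=1$.
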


\begin{proof}
    Starting from a contiguous block of mutants, throughout the process, mutants occupy a contiguous block of vertices.
    Let $k$ be the number of mutants in the block.
    Note that $p_1 = \frac{r}{T_1}$.
    For $k \geq 2$, we have $p_k = \frac{2r}{T_k}$, as there is one active mutant at each end of the block.
    Similarly, $q_{N-1} = \frac{1}{T_{N-1}}$.
    For $k\leq N-2$, the residents will form a block, and there are two active residents; hence $p_{i}^- = \frac{2}{T_i}$.
    Therefore $\gamma_1 = \frac{2}{r}$, $\gamma_{N-1}=\frac{1}{2r}$ and $\gamma_k = \frac{1}{r}$ for $2 \leq k \leq N-2$. Thus, by~\Cref{lem:one-dimensional} we get

\begin{align*}
    \fps(C_N;S) &= \frac{1+\sum_{j=1}^{i-1}\prod_{k=1}^{j}\gamma_k}{1+ \sum_{j=1}^{N-1}\prod_{k=1}^j\gamma_k }\\
    &=\frac{2(1+ \sum_{j=1}^{i-1}\frac{1}{r^j}) - 1 }{2(1 + \sum_{j=1}^{N-1}\frac{1}{r^j})-1 - \frac{1}{r^{N-1}}}. 
\end{align*}
We consider two cases. First, $r=1$. Then
\begin{align*}
     \fpns(C_N; S) &=\frac{2(1+ \sum_{j=1}^{i-1}\frac{1}{r^j}) - 1 }{2(1 + \sum_{j=1}^{N-1}\frac{1}{r^j})-1 - \frac{1}{r^{N-1}}} \\
     &=\frac{2i-1}{2N-2}.
\end{align*}
If $r\not= 1$, then 
\begin{align*}
\fps(C_N; S) &= \frac{2(1-\frac{1}{r^i})(1-\frac{1}{r})^{-1} - 1  }{2(1-\frac1{r^{N}})(1-\frac1r)^{-1} -1 -\frac{1}{r^{N-1}}} \\
    &=  \frac{1 + \frac{1}{r} - \frac{2}{r^i}}{1+\frac1r -\frac{1}{r^{N-1}} - \frac{1}{r^{N}}}.
\end{align*}
\end{proof}

\begin{corollary}
    Let $r>0$, then for a single initial mutant we have 
    $$ \fps (C_N) =\begin{cases}
     \frac{1}{2N- 2} \text{\ \ if $r = 1$,}\\
     \\
     \frac{1 - \frac{1}{r}}{1+\frac1r -\frac{1}{r^{N-1}} - \frac{1}{r^{N}}} \text{\ \ if $r\not = 1$.}
     \end{cases}$$
\end{corollary}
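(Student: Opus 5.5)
This corollary is the special case $i=1$ of \cref{lem:cycle-ss}. A single mutant is trivially a contiguous block of size $1$. By vertex-transitivity of $C_N$, the value $\fps(C_N;\{v\})$ does not depend on $v$, so the average over the starting vertex equals this common value. It therefore suffices to evaluate the two formulas of \cref{lem:cycle-ss} at $i=1$. This requires $N\ge 3$, which is the standing assumption of that lemma.

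For $r=1$, substituting $i=1$ into $\frac{2i-1}{2N-2}$ gives $\frac{1}{2N-2}$ directly.

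For $r\neq 1$, the numerator $1+\frac1r-\frac{2}{r^i}$ at $i=1$ becomes $1+\frac1r-\frac2r=1-\frac1r$. The denominator does not depend on $i$, so it stays $1+\frac1r-\frac1{r^{N-1}}-\frac1{r^N}$. This is exactly the claimed expression.

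As a sanity check, the same value can be read off \cref{lem:one-dimensional} with $\gamma_1=2/r$, $\gamma_k=1/r$ for $2\le k\le N-2$, and $\gamma_{N-1}=1/(2r)$. The numerator is then $1$, and the denominator is $1+\frac2r+\dots$, which matches after multiplying through by $1-\frac1r$.

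There is no real obstacle. The only points to note are the implicit requirement $N\ge3$ and the symmetry argument that removes the averaging over starting vertices.
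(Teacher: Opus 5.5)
Your proposal is correct and takes the same route as the paper. The paper states this corollary without a separate proof, as the $i=1$ case of \cref{lem:cycle-ss}: the numerator $1+\frac1r-\frac2r$ reduces to $1-\frac1r$, and $\frac{2i-1}{2N-2}$ becomes $\frac{1}{2N-2}$. Your remarks on vertex-transitivity (which turns $\fps(C_N;\{v\})$ into the average $\fps(C_N)$) and on the standing assumption $N\ge 3$ spell out what the paper leaves implicit.
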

We also derive asymptotics for the fixation probability of a single initial mutant on a cycle. 
\begin{corollary}\label{cor:cor-cycle-ss}
    Let $C_N$ be the cycle of size $N$, where $N$ is large. Then
\begin{align*}
\fpns(C_N) &= \frac1{2N-2}
\text{\ \  if $r=1$, }\\
\fps(C_N) &\approx \frac{1-\frac1r}{1+\frac1r}
\text{\ \  if $r>1$, }\\
\fps(C_N) &\sim r^N  
\text{\ \  if $r <1$}.
\end{align*}
\end{corollary}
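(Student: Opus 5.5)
The plan is to start from the preceding corollary, which gives for a single initial mutant
\[
\fps(C_N)=\frac{1-\frac1r}{1+\frac1r-\frac1{r^{N-1}}-\frac1{r^N}}\quad (r\neq 1),\qquad \fpns(C_N)=\frac{1}{2N-2},
\]
and to analyze this exact expression in the three regimes $r=1$, $r>1$ and $r<1$. Since the formula is already closed-form, no further probabilistic argument is needed; each claim follows by a limit computation in $N$.

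\emph{Case $r=1$.} This is exactly the corollary, so there is nothing to prove.

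\emph{Case $r>1$.} Here $r^{-(N-1)}\to 0$ and $r^{-N}\to 0$ as $N\to\infty$. The denominator therefore tends to $1+\frac1r$, and
\[
\fps(C_N)\to\frac{1-\frac1r}{1+\frac1r}.
\]
I will make the meaning of ``$\approx$'' precise as this limit. The relative error is of order $r^{-(N-1)}$, so the approximation is exponentially accurate.

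\emph{Case $r<1$.} Set $s=1/r>1$ and multiply numerator and denominator by $r^N$:
\[
\fps(C_N)=\frac{(1-s)\,r^N}{(1+s)\,r^N-r-1}=\frac{(s-1)\,r^N}{1+r-(1+s)\,r^N}.
\]
As $N\to\infty$ the denominator tends to $1+r$. Hence
\[
\fps(C_N)\sim\frac{s-1}{1+r}\,r^N=\frac{1-r}{r(1+r)}\,r^N,
\]
that is, $\fps(C_N)$ equals $r^N$ up to a constant factor depending only on $r$. I interpret ``$\sim r^N$'' in this sense, namely $\Theta(r^N)$, with the explicit constant $\frac{1-r}{r(1+r)}$.

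The only point needing care is the sign bookkeeping in the $r<1$ case, since both numerator and denominator of the original expression are negative there. Beyond that, the remaining task is to state clearly which notion of asymptotic equivalence is meant by ``$\approx$'' and ``$\sim$''. Everything else is routine algebra.
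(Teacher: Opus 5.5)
Your proposal is correct and follows essentially the same route as the paper: take the closed-form single-mutant formula from the cycle lemma, let $N\to\infty$ for $r>1$, and for $r<1$ multiply through by $r^N$ to get $\fps(C_N)\approx r^N\cdot\frac{1-1/r}{-r-1}$, which is exactly your constant $\frac{1-r}{r(1+r)}$. You are also right to state that ``$\sim r^N$'' must be read as $\Theta(r^N)$, since the ratio tends to $\frac{1-r}{r(1+r)}$ rather than to $1$; the paper leaves this interpretation implicit.
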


\begin{proof}
    We have already proven the result for $r=1$ in \Cref{lem:cycle-ss}. Let $r\not=1$. Then according to \Cref{lem:cycle-ss} the fixation probability satisfies
    \[
     \fps(C_N) = \frac{1 - \frac{1}{r}}{1+\frac1r -\frac{1}{r^{N-1}} - \frac{1}{r^{N}}}.
    \]
    Let $r>1$, then 
    \[
    \fps(C_N) = \frac{1 - \frac{1}{r}}{1+\frac1r -\frac{1}{r^{N-1}} - \frac{1}{r^{N}}}\approx\frac{1-\frac1r}{1+\frac1r}.    \]
    Let $r<1$, then 
    \[
    \fps(C_N) = \frac{1 - \frac{1}{r}}{1+\frac1r -\frac{1}{r^{N-1}} - \frac{1}{r^{N}}} = r^N\cdot \frac{1-\frac1r}{r^N+r^{N-1} - r - 1}\approx r^N\cdot\frac{1-\frac1r}{-r-1}\sim r^N.
    \]
    
\end{proof} 
\section{Bounded-degree graphs}

In this section, we examine graphs with respect to their degree.
We present three types of results.
First, we show two general results: if the fitness is higher than the maximum degree, then the fixation probability is at least constant.
Second, we design a graph which we call an $\eps$-bead graph, where the fixation probability is constant for every $r>1$.
Finally, we demonstrate that a small maximal degree does not necessarily guarantee a large fixation probability:
we prove that the fixation probability on $4$-regular expanders is exponentially small for $1 \le r \le 1.9$.

\begin{lemma}
   Let $G_N$ be a graph with $N$ vertices and maximum degree $\Delta$.
   Suppose that $r\geq \Delta(1+\varepsilon)$ for some $\varepsilon >0$. Then the fixation probability satisfies 

    $$
    \fps(G_N) > \frac{\varepsilon}{1+\varepsilon}.
    $$

\end{lemma}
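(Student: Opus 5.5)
Write $S_t$ for the mutant set after $t$ steps and $X_t=|S_t|$. The idea is to dominate $X_t$ from below by a biased random walk with backward bias at most $1/(1+\varepsilon)$. Then I apply the gambler's-ruin formula from Lemma~\ref{lem:one-dimensional}.

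The key step is bounding $\gamma(S)$ for every state $\varnothing\neq S\subsetneq V$. By the Remark, $\gamma(S)=\frac1r\cdot\frac{\AR(S)}{\AM(S)}$. Every active resident has a mutant neighbor, so it lies in the boundary of $S$. Each active mutant has at most $\Delta$ resident neighbors, and every active resident is adjacent to at least one active mutant. Counting boundary edges between mutants and residents therefore gives
\[
\AR(S)\le \#\{\text{edges between $S$ and $V\setminus S$}\}\le \Delta\cdot \AM(S).
\]
Hence $\gamma(S)\le \Delta/r\le 1/(1+\varepsilon)$ uniformly over all non-absorbing states.

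Next, the embedded chain. I consider only the steps in which $X_t$ changes, i.e.\ steps where an active individual reproduces; other steps leave the state unchanged. Since $G$ is connected, every non-absorbing state has a positive probability of change. In the embedded chain, $X$ moves by $\pm1$ with up-probability $p(S)=1/(1+\gamma(S))\ge (1+\varepsilon)/(2+\varepsilon)$ regardless of the configuration. The chain is not Markov in $X$ alone. However, a standard coupling, or equivalently the supermartingale $\gamma_0^{X_t}$ with $\gamma_0=1/(1+\varepsilon)$, gives absorption at $N$ with at least the probability of the homogeneous walk with constant bias $\gamma_0$.

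To verify the supermartingale property: from any state,
\[
\mathbb E[\gamma_0^{X_{t+1}}]=\gamma_0^{X_t}\bigl(p\gamma_0+q\gamma_0^{-1}\bigr),
\]
and this is at most $\gamma_0^{X_t}$ because $q/p\le\gamma_0$. By optional stopping at absorption, starting from $X_0=1$,
\[
\gamma_0\ge \Pr(\text{fix})\gamma_0^N+(1-\Pr(\text{fix})),
\]
so
\[
\Pr(\text{fix})\ge \frac{1-\gamma_0}{1-\gamma_0^N}>1-\gamma_0=\frac{\varepsilon}{1+\varepsilon}.
\]
This holds for every starting vertex $v$, and averaging over $v$ gives the claim about $\fps(G_N)$.

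The main obstacle is the fact that the state is a set and not a count, so Lemma~\ref{lem:one-dimensional} cannot be applied verbatim. The supermartingale argument handles this cleanly; I expect only routine checking there. The inequality is strict because $\gamma_0^N>0$.
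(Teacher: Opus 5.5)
Your proof is correct and follows the same route as the paper: you bound $\AR(S)\le\Delta\cdot\AM(S)$ to get $\gamma(S)\le\Delta/r\le 1/(1+\varepsilon)$ in every non-absorbing state, and then compare with the homogeneous biased walk to obtain $\frac{1-\gamma_0}{1-\gamma_0^N}>\frac{\varepsilon}{1+\varepsilon}$. The only difference is that the paper applies Lemma~\ref{lem:one-dimensional} directly, whereas your supermartingale $\gamma_0^{X_t}$ with optional stopping makes that comparison rigorous for a process whose state is a set rather than a count.
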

\begin{myproof}
    Let $M$ be a set of mutants of size $k$.
    Every active mutant has at most $\Delta$ resident neighbors.
    Therefore, in every step, we have
    \[
        \AR(M) \leq \Delta \cdot \AM(M).
    \]
    
    \[
        \gamma(M) =   \frac1r\cdot \frac{\AR(M)}{\AM(M)} \leq \frac{\Delta}{r} \leq \frac{1}{1+\varepsilon}.
    \]

    As the backward bias is at most $\frac{1}{1+\varepsilon}$ in every configuration, we get by \cref{lem:one-dimensional}.
    for $r\geq \Delta \cdot (1+\varepsilon)$, the fixation probability satisfies 
    \[
    \fps(G_N) \geq \frac{1-\frac{1}{1+\varepsilon}}{1-\frac{1}{(1+\varepsilon)^{N}}} > \frac{\varepsilon}{1+\varepsilon}.
    \]
\end{myproof}

Next, we show that there exist large 4-regular graphs, namely the bead graphs, where the fixation probability of only mildly advantageous replacer invaders is already substantial.

\begin{lemma}\label{lem:bead-constant-fix}
    There exists a constant $c>0$ such that for every $r\ge 1.1$ and every 4-regular graph $B_N$ we have
    $$\fps(B_N)>c.$$

\end{lemma}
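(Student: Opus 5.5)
We read the statement as a claim about the specific family of bead graphs $B_N$ from \cref{fig:fp_vs_r}, not about all 4-regular graphs. By \cref{thm:1}, many 4-regular graphs cannot satisfy it; the second part of \cref{thm:cn_main} says as much. So the first step is to fix the construction.

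\textbf{Construction.} Take $m=N/5$ copies of a bead $H$, namely $K_5$ with one edge $\{a,b\}$ removed. In $H$ the vertices $a,b$ have degree $3$ and the other three have degree $4$. Arrange the beads in a cycle and join $b$ of bead $t$ to $a$ of bead $t+1$. This gives a connected 4-regular graph in which consecutive beads are linked by a single bridge edge. If the $\varepsilon$ in ``$\varepsilon$-bead'' is meant to be a free parameter, I would instead use beads whose size or internal density is tuned by it. The argument below only needs each bead to be a constant-size, internally well-connected gadget with one bridge on each side.

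\textbf{Reduction to a bead-level walk.} Call the configuration \emph{clean} if every bead is monochromatic and the invaded beads form a contiguous arc; this mirrors the contiguous-block structure in \cref{lem:cycle-ss}. From a clean state with $2\le k\le m-2$ invaded beads, only the two bridges on the interface are active.

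At such a bridge, the invader endpoint and the resident endpoint each have exactly one neighbour of the opposite type. So each reproduces across with probability proportional to $r$ and $1$ respectively. Once a foreign individual enters a bead, that bead evolves as a small finite Markov chain on $H$ with one boundary vertex, while the neighbouring bead may also be perturbed through the bridge. I would compute, for this finite chain:
\begin{itemize}
\item $\pi_+(r)$, the probability that a single invader entering a resident bead at $a$ takes over the whole bead before being expelled;
\item $\pi_-(r)$, the analogous probability for a single resident entering an invader bead.
\end{itemize}
Both depend only on $H$ and $r$, not on $N$. The process, observed at the times it returns to a clean state, is then a birth--death chain on the number of invaded beads. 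Its backward bias is
\[
\gamma=\frac{\pi_-(r)}{r\,\pi_+(r)},
\]
up to $O(1)$ corrections at the two ends of the chain. Applying \cref{lem:one-dimensional} then gives a bead-level fixation probability of at least $1-\gamma$ whenever $\gamma<1$.

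\textbf{Starting phase.} The initial invader lands in some bead. By an explicit finite computation on $H$ together with its two neighbouring bridge endpoints, it takes over its bead, reaching a clean state with one invaded bead, with probability some $c_0(r)>0$. After that it reaches two invaded beads with constant probability. Monotonicity in $r$ should let us fix a single $c$ for all $r\ge1.1$. For very large $r$ the bias only improves, and if needed, once $r$ exceeds the maximum degree $4$ the preceding lemma already gives a constant lower bound.

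\textbf{Main obstacle.} Two points need real work.
\begin{itemize}
\item \emph{Transient phase.} We must show the process returns to a clean state with probability bounded below before anything catastrophic happens. In particular, both bridges of a contested bead could be active at once, and a failed invasion could leave a mixed bead. I would handle this by a direct finite-state computation on two adjacent beads plus bridges, treating the rare non-clean exits as losses for the invader. This gives a lower bound on the forward step and an upper bound on the backward step.
\item \emph{The numerical inequality $\pi_-(1.1)<1.1\,\pi_+(1.1)$.} This is where the threshold $1.1$ must come from. Since the bead is internally dense, the entering minority is disfavoured by the replacer effect in both directions; the comparison relies on the extra factor $r$ together with the within-bead bias from \cref{lem:ss_exp_small}-type computations. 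If $K_5$ minus an edge does not give $\gamma<1$ at $r=1.1$, I would first try a different gadget, for example a larger bead or a bead with a more symmetric interior, until the inequality holds.
\end{itemize}
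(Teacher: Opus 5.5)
Your skeleton matches the paper's. It uses the same gadget ($K_5$ minus an edge, with beads joined in a ring by single bridges) and the same clean/mixed decomposition. It uses the same bead-level birth--death chain, with forward/backward ratio $r\pi_+/\pi_-$ (the paper's $p^+q^+/(p^-q^-)$), and a constant-probability capture of the first bead; the paper gets the latter by prescribing four active steps, each of probability at least $\frac17\cdot\frac14$.

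The gap is in the step you flag yourself, and that step carries the proof. To apply the one-dimensional absorption formula, every excursion from a clean state must end, for the purpose of a lower bound, in a clean state with $k+1$, $k$ or $k-1$ invaded beads. The paper obtains this from an explicit process $\sen'$ that is worse for the invader, defined by two rules:
\begin{itemize}
\item a mutant of the mixed bead trying to reproduce into the next resident bead is ignored;
\item as soon as a resident recaptures the vertex attached to the adjacent mutant bead, the whole bead is declared resident.
\end{itemize}
Each excursion is then a closed $32$-state chain on one bead with exactly two outcomes. Solving it at $r=1.1$ gives $q^+>7.9\%$ and $q^-<8.2\%$.

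Your replacement, ``treat non-clean exits as losses'', is not yet an argument. If a loss means extinction, it is fatal: fixation needs $\Theta(N)$ bead-level steps, each with a constant chance of a non-clean exit, so the bound becomes exponentially small. If a loss means a backward bead step, you need a comparison principle: the true process from the non-clean state must do at least as well as from the clean state you assign. That principle is monotonicity of $\sen$ in the mutant set. It does hold, by the usual clock coupling. A resident choosing uniformly from $N(v)\cap S'$ can be coupled with one choosing from $N(v)\cap S\subseteq N(v)\cap S'$ so that $S\subseteq S'$ is preserved, and symmetrically for mutants choosing from $N(v)\setminus S'\subseteq N(v)\setminus S$. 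You never invoke it, and the paper's $\sen'$ silently rests on it as well.

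Even with monotonicity, how you charge exits matters, because the margin is thin: the paper's ratio is only about $1.04$ after rounding. Some non-clean events favour the invader. An example is a mutant on the exit vertex of a half-won bead jumping into the next bead; $\sen'$ discards such events rather than penalising them. Suppose you instead charge every such spill-over as a loss, both in the excursion where a mutant has entered a resident bead and in the one where a resident has entered a mutant bead. Then the one-bead chains give roughly $7.5\%$ against $8.7\%$ at $r=1.1$, and $\pi_-<r\pi_+$ fails.

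So the finite computation you defer, together with a correct pessimistic reduction, is the content of the lemma. It cannot be left conditional, still less with the option of swapping the gadget if it fails.

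A minor point: the high-degree bound of Theorem~1 does not show that general 4-regular graphs violate the statement. For $D=4$ it only gives the constant upper bound $(1+\frac1r)^{-4}$; it is the expander lemma that rules them out.
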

\begin{proof}
    Fix $r\ge 1.1$. We want to bound the mutant fixation probability from below by a constant, independent of $N$. The argument consists of two parts.
    In the first part, we will argue that with at least a constant probability $c_0$ the mutant takes over the bead where it initially appeared.
    In the second part, we will show that from that point on, the mutants have at least $c_1$-times higher chance to take over a new bead than to lose a bead, where $c_1>1$ is a constant strictly larger than 1.
    By \cref{lem:one-dimensional} we can then conclude that $\fps(B_N)\ge c_0\cdot (1-1/c_1)$, which is the required strictly positive constant. It remains to prove the two parts.

    Recall that a vertex in a current state of the population is \textit{active}, if it has a neighbor of the other type. Similarly, we say that a step of the process is \textit{active} if the set $S$ of vertices occupied by mutants changes in that step.

    For the first part, fix a sequence $(a_1,a_2,a_3,a_4)$ of four active steps that leads to the mutants taking over their initial bead, one vertex at a time. We argue that, ignoring the non-active steps, this sequence happens with constant probability. At any point during this sequence, at most 7 vertices are active (the five vertices in the bead, plus the two adjacent vertices) and at most 5 of those 7 vertices are mutants, so with probability at least $\frac{r}{5r+2}\ge \frac17$ the correct mutant vertex gets selected for reproduction. Since the mutant has only 4 neighbors, with probability at least $\frac14$ it replaces the correct neighbor, as prescribed by the sequence. Thus, with probability at least $\left(\frac17\cdot\frac14\right)^4=c_0$ the first four active steps precisely match the prescribed sequence, and the mutants win the initial bead.

    For the second part, we track the interface between the mutants and residents. We distinguish the \textit{clustered} states where each bead consists of individuals of only one type, and the \textit{mixed} states where one bead contains individuals of both types. The clustered state can be labeled by the number $k$ of (consecutive) beads that are fully occupied by mutants. When currently at state $k$, the next active step takes us either to state $k^+$, where mutants invade the next bead, or to state $k^-$, where residents invade the mutant bead, see~\cref{fig:beads}.
    The corresponding transition probabilities are $p^+=\frac{r}{r+1}> 52\%$ and $p^-=\frac{1}{r+1}< 48\%$, where we have used $r\ge 1.1$.

\begin{figure}
    \centering
    \includegraphics[width=0.8\linewidth]{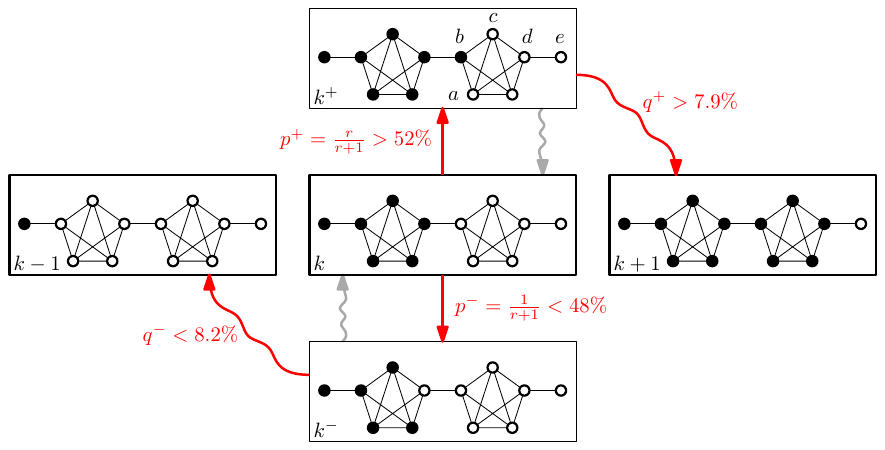}
    \caption{Evolutionary dynamics on the bead graphs.} Two adjacent beads are shown. States are labeled 
    \label{fig:beads}
\end{figure}
    
    To treat the mixed states, we consider a modified process $\sen'$ that is strictly harder for the mutants than $\sen$, and we argue that even in this modified process the chance of winning a bead is at least $c_1$-times larger than the chance of losing a bead. Namely, we modify the process $\sen$ in any mixed state as follows:
    \begin{enumerate}
        \item Whenever a mutant in a mixed bead attempts to place an offspring into an adjacent resident bead, that step is ignored. 
        \item Whenever a resident in a mixed bead places an offspring onto the vertex that is connected to the adjacent mutant bead, the current bead becomes all residents.
    \end{enumerate}
    For example, consider the state $k^+$, with vertices labeled $a$, $b$, $c$, $d$, $e$ as shown. If the next two active steps are $b\to c$ and $c\to d$, then in $\sen'$ the vertices occupied by mutants are $\{b,c,d\}$, mimicking the process $\sen$.
    If the next active step were $d\to e$, then in $\sen'$ this step is ignored.
    If the next active step were $a\to b$, then in $\sen'$ the whole bead becomes all residents and we return to state $k$.

    It is clear that with probability 1, the process $\sen'$ resolves in finite time to either the state where the whole mixed bead is mutants, that is, state $k+1$, or back to the state $k$. The underlying Markov chain has $2^5=32$ states, one for each possible configuration of mutants and resident on the bead. Solving the absorption probabilities on the Markov chain explicitly by inverting the corresponding $32\times 32$ matrix, we find that even with $r=1.1$ and even in $\sen'$, the probability of advancing to state $k+1$ (as opposed to reverting to state $k$) is at least $q^+>7.9\%$. See the linked code for details.

    Similarly, we find that once in state $k^-$, the probability of falling to state $k-1$ as opposed to reverting back to state $k$ (in $\sen'$, with $r=1.1$) is at most $q^-<8.2\%$.
    Thus, starting from state $k$, the probability $\gamma$ of visiting state $k+1$ before visiting state $k-1$ satisfies
    \[\gamma= \frac{p^+\cdot q^+}{p^-\cdot q^-}\ge\frac{52\cdot 7.9}{48\cdot 8.2}>1.04\]
    and we can set $c_2=1.04$.
\end{proof}

Finally, we define a vertex expander and show that for all $D \geq 4$ and $r\leq D-2.1$ there exists a $D$-regular graph for which the fixation probability of mutants is exponentially small. 

\begin{defn}
    A graph $G$ is a $(K,A)$ vertex expander if for all sets $S$ of at most $K$ vertices, the neighborhood $N(S)$ is of size at least $A\cdot |S|$.
\end{defn}

\begin{lemma}[\cite{expanders}]\label{thm:regular_expanders} %
    For all constants $D \geq 3$, there is a constant $\alpha > 0$ such that for all $N$, a random $D$-regular undirected graph on $N$ vertices is an $(\alpha N, D-1.01)$ vertex expander with probability at least $1/2$.
\end{lemma}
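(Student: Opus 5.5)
The plan is to reduce vertex expansion to a local sparsity property of random $D$-regular graphs, prove that property by a first-moment bound in the configuration model, and then pass to uniform simple graphs. Throughout, $N(S)$ is the open neighbourhood, so it may contain vertices of $S$.

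\textbf{Step 1 (sparsity implies expansion).} Fix a small $\delta>0$, to be chosen depending only on the $0.01$ slack. Suppose every vertex set $U$ with $|U|\le \beta N$ spans at most $(1+\delta)|U|$ edges. Take $S$ with $|S|=s\le \alpha N$, where $\alpha=\beta/D$, and set $U=S\cup N(S)$, so that $|U|\le Ds\le\beta N$.

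Every edge incident to $S$ lies inside $U$. Counting degrees gives
\[
e(U)\ \ge\ Ds-e(S).
\]
Sparsity applied to $U$ gives
\[
e(U)\ \le\ (1+\delta)\bigl(s+|N(S)\setminus S|\bigr).
\]
Combining the two,
\[
|N(S)\setminus S|\ \ge\ (D-1-\delta)s-e(S).
\]

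Now let $S'\subseteq S$ be the set of vertices of $S$ having a neighbour in $S$. Then $S'\subseteq N(S)\cap S$, and every edge of $G[S]$ has both endpoints in $S'$. Sparsity applied to $S'$ gives $e(S)\le(1+\delta)|S'|$, hence
\[
|N(S)\cap S|\ \ge\ e(S)/(1+\delta).
\]
Adding the two bounds yields
\[
|N(S)|\ \ge\ (D-1-\delta)s-\tfrac{\delta}{1+\delta}\,e(S)\ \ge\ (D-1-\delta-\delta D)s,
\]
using $e(S)\le Ds/2$. Choosing $\delta$ small enough, for instance $\delta=0.01/(2D+2)$, makes this at least $(D-1.01)s$.

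\textbf{Step 2 (sparsity in the configuration model).} Take the configuration model, a uniform perfect matching on $DN$ half-edges. I would bound the probability that some $U$ with $|U|=u\le\beta N$ spans at least $m=\lceil(1+\delta)u\rceil$ edges by a union bound:
\[
\sum_{u\le\beta N}\binom{N}{u}\binom{\binom{Du}{2}}{m}\Bigl(\frac{Du}{DN-2m}\Bigr)^{m}.
\]
The middle factor chooses $m$ pairs of half-edges inside $U$, and the last factor bounds the probability that all of them are matched, sequentially. Using $\binom{a}{b}\le(ea/b)^b$, each term is at most
\[
\Bigl[C_D\,(u/N)^{\delta}\Bigr]^{u}
\]
for a constant $C_D$ depending only on $D$ and $\delta$. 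This is the key computation: the exponent of $u/N$ is $m-u\ge\delta u>0$. Splitting the sum into $u\le\log N$ and $\log N<u\le\beta N$, with $\beta$ small enough that $C_D\beta^{\delta}<1/2$, the total is $o(1)$.

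\textbf{Step 3 (transfer and small $N$).} The configuration model produces a simple graph with probability tending to $e^{-(D^2-1)/4}>0$, and conditioned on simplicity it is uniform over $D$-regular graphs. An event of probability $o(1)$ in the configuration model therefore still has probability $o(1)$ for the uniform random $D$-regular graph, and for $N\ge N_0$ the expansion property holds with probability at least $1/2$. For the finitely many $N<N_0$, I would shrink $\alpha$ below $1/N_0$; then the condition ranges only over $S=\varnothing$ and holds vacuously.

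\textbf{Main obstacle.} I expect the main difficulty to be Step 2: making the union bound uniform for $u$ up to a linear size $\beta N$. The argument depends on the extra $\delta u$ edges producing the $(u/N)^{\delta u}$ gain that beats the $\binom{N}{u}$ entropy factor. Step 1 is short, but the pair count for $N(S)\cap S$ is needed: a naive count only yields expansion $D-2$, not $D-1.01$.
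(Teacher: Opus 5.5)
The paper does not prove this lemma. It quotes it from \cite{expanders} and uses it only in \cref{lem:expander_exp_small}, via $\AR(M)\ge|N(M)|-|M|\ge(D-2.01)|M|$. Your route is a legitimate, self-contained alternative: you prove local sparsity of random regular graphs (every set of at most $\beta N$ vertices spans at most $(1+\delta)$ times as many edges as vertices), then deduce expansion by a deterministic count. Step~3 is fine, including shrinking $\alpha$ below $1/N_0$ so that small $N$ become vacuous. Your route also gives what the paper actually needs more directly: your first inequality together with $e(S)\le(1+\delta)s$ yields $|N(S)\setminus S|\ge\frac{D-2-2\delta}{1+\delta}\,s$, with no need to count $N(S)\cap S$. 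However, the computation you call the key one is wrong as written.

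In Step~2 you mix two counting schemes. The factor $\binom{\binom{Du}{2}}{m}$ already prescribes which pairs of half-edges are matched. The probability that $m$ prescribed disjoint pairs all occur is $\prod_{i<m}(DN-2i-1)^{-1}\le(DN-2m)^{-m}$, with no factor $Du$ per pair. The factor $\bigl(\frac{Du}{DN-2m}\bigr)^m$ belongs to the other scheme, where you choose $m$ half-edges (a factor $\binom{Du}{m}$) and require each to be matched somewhere inside $U$. Mixing them inserts a spurious $(Du)^m$, and the claimed bound $[C_D(u/N)^\delta]^u$ then fails. Your term behaves like $C^u N^{u-m}u^{2m-u}\approx C^u\bigl(u^{1+2\delta}N^{-\delta}\bigr)^u$; already at $u=\sqrt N$ it is at least $u^u$, so the sum over $u\le\beta N$ is far from $o(1)$. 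The fix is to use either scheme consistently, together with $2m\le DN/2$. Then each term is at most $\binom{N}{u}\bigl(\frac{2eDu}{N}\bigr)^m\le e^u(2eD)^m(u/N)^{m-u}$. Since $u<m\le 3u$ and $m-u\ge\delta u$, this is at most $\bigl[e(2eD)^3(u/N)^{\delta}\bigr]^u$, and your split at $u=\log N$ goes through.

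Step~1 needs two small repairs.
\begin{itemize}
\item $|S\cup N(S)|$ can equal $(D+1)s$, for example when $S$ is independent with disjoint neighbourhoods, so take $\alpha=\beta/(D+1)$ rather than $\beta/D$.
\item Combining $e(U)\ge Ds-e(S)$ with $e(U)\le(1+\delta)(s+|N(S)\setminus S|)$ gives $(1+\delta)|N(S)\setminus S|\ge(D-1-\delta)s-e(S)$; you dropped the factor $1+\delta$. Done correctly, adding $|N(S)\cap S|\ge e(S)/(1+\delta)$ cancels $e(S)$ exactly and gives $|N(S)|\ge\bigl(\frac{D}{1+\delta}-1\bigr)s$. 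So any $\delta\le 0.01/D$ suffices, and you do not need $e(S)\le Ds/2$.
\end{itemize}
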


\begin{lemma}\label{lem:expander_exp_small}
    Let $D \geq 4$ be a positive integer and let $r \le D-2.1$.
    Then there exists $N_0$, such that for all $N \ge N_0$ there exists a $D$-regular graph $G_N$ with $N$ vertices such that 
    \[
        \fps(G_N)\le 2^{-\Omega(N)}
    \]
\end{lemma}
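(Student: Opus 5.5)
The plan is to take $G_N$ to be a random $D$-regular graph that is an $(\alpha N, D-1.01)$ vertex expander; by \cref{thm:regular_expanders} such graphs exist for every $N$. Track the mutant set $S$ and use the backward bias from the Remark, $\gamma(S)=\frac1r\cdot\frac{\AR(S)}{\AM(S)}$. While $|S|\le \alpha N$, the expansion property forces $\gamma(S)>1$ by a constant margin. To pass from this drift bound to fixation, couple the process with a one-dimensional walk, as in the proof of \cref{lem:min_degree}: the size $|S|$ changes by $\pm1$ in each active step. Its backward bias is at least some $\gamma_0>1$ for every $k\le \alpha N$, and in state $k=1$ it equals $D/r$.

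To get the bias, fix $S$ with $|S|=k\le\alpha N$ and let $\partial S=N(S)\setminus S$ be the outer vertex boundary. Every vertex of $\partial S$ is an active resident, so $\AR(S)\ge |\partial S|$. We also have $\AM(S)\le k$. With the convention that $N(S)$ means the vertices outside $S$ adjacent to $S$ (the outer boundary), $|\partial S|\ge (D-1.01)k$. Hence
\[
\gamma(S)\ge \frac{D-1.01}{r}\ge \frac{D-1.01}{D-2.1}=:\gamma_0>1 .
\]
If instead the definition of $N(S)$ includes $S$, the loss is only a constant: $|\partial S|\ge (D-2.01)k$, and $(D-2.01)/(D-2.1)>1$ still holds. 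This is presumably why the slack constant $2.1$ appears in the statement. I will use this safer bound.

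Next, apply \cref{lem:one-dimensional} via monotone domination. The probability that $|S|$ ever reaches $\lfloor\alpha N\rfloor$ before hitting $0$ is at most that of a birth–death chain on $\{0,\dots,K\}$, with $K=\lfloor\alpha N\rfloor$, whose biases are all $\ge\gamma_0$. The justification is that each active step moves $|S|$ up or down by one, and the conditional probability of a down-move is at least $\gamma_0/(1+\gamma_0)$ regardless of the current shape of $S$. A standard gambler's-ruin comparison then bounds the hitting probability by
\[
\frac{1}{1+\sum_{j=1}^{K-1}\gamma_0^{\,j}}\le \gamma_0^{-(K-1)}=2^{-\Omega(N)} .
\]
Fixation requires reaching size $K$ first, so $\fps(G_N;v)\le 2^{-\Omega(N)}$ for every start vertex $v$, and averaging over $v$ gives the claim. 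We choose $N_0$ so that $\alpha N_0\ge 2$.

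The main obstacle is making the comparison rigorous. The chain on sets is not one-dimensional, so I would state a small domination lemma: for a process on $\mathbb Z$ with $\pm1$ steps whose down-probability, given any history, is at least $\gamma_0/(1+\gamma_0)$ while below $K$, the probability of hitting $K$ before $0$ is at most the corresponding gambler's-ruin value. I would prove it by an explicit coupling with i.i.d.\ biased steps, or by checking that $\gamma_0^{-|S_t|}$ stopped at the exit time is a supermartingale and applying optional stopping. I would also double-check the expander convention to confirm that the bound on $|\partial S|$ holds.
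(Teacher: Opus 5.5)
Your proposal is correct and matches the paper's proof. Both take an $(\alpha N, D-1.01)$ expander from \cref{thm:regular_expanders}, bound $\AR(S)\ge |N(S)|-|S|\ge (D-2.01)|S|$ and $\AM(S)\le |S|$ to get a backward bias of at least $(D-2.01)/(D-2.1)>1$ while $|S|<\alpha N$, and conclude with the gambler's-ruin bound $1/(1+\gamma_0+\dots+\gamma_0^{K-1})$; the paper simply invokes \cref{lem:one-dimensional}, whereas you make the domination step explicit. One small sign slip: with $\gamma_0>1$ the backward bias, the supermartingale is $\gamma_0^{|S_t|}$, not $\gamma_0^{-|S_t|}$ (the latter is a submartingale, and optional stopping on it gives only a non-exponential bound), but your coupling alternative works as stated.
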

\begin{myproof}
    Let $\alpha >0$ be the constant from \Cref{thm:regular_expanders} and let $G_N$ be an $(\alpha N, D-1.01)$ expander.
    Let $M$ be the set of mutant vertices, then $\AM(M) \leq |M|$ and $\AR(M) \geq |N(M)| - |M|$.
    
    Suppose that  $|M| < \alpha N$. 
    Then for every such set of vertices $M$ we have $|N(M)| \geq (D-1.01)\cdot|M| $.
    Hence $\AR(M) \geq |N(M)| -|M| \geq (D-1.01)\cdot|M| - |M| \geq (D - 2.01)\cdot \AM(M) $.
    
    Therefore, in every state where the number of mutants is less than $\alpha N$, we have $\gamma(S) = \frac1r\cdot\frac{\AR(M)}{\AM(M)} \geq \frac1r\cdot (D-2.01)$.
    Notice that for $r \leq D -2.1$ we have 
    \[
    \gamma(S) \geq \frac1r\cdot ({D-2.01}) \geq \frac{D-2.01}{D-2.1},
    \]
    so there exists a constant $f >  1$ such that $\gamma(S) \geq f$.
    This means, until there are at most $\alpha N$ mutants, there is always bias against them.
    We can bound the fixation probability by:
    \[
        \fps(G_N) \leq \frac{1}{1+f + \cdots + f^{N\alpha-1}} = \frac{1 - f}{1-f^{N\alpha}}\,,
    \]
    which means $\fps(G_N)\le 2^{-\Omega(N)}$.
\end{myproof}

Combining \cref{cor:cor-cycle-ss}, \cref{lem:bead-constant-fix} and \cref{lem:expander_exp_small} we get the following theorem.

\begin{thm}\label{thm:cn}
\leavevmode
\begin{enumerate}
    \item (Cycle graph) Let $C_N$ be a cycle graph. Then    %
    \begin{enumerate}
        \item %
         If $r=1$ then $\fps(C_N)=\frac1{2N-2}$.
        \item %
         If $r>1$ and $N$ is large then
        $\fps(C_N) \approx\frac{1-\frac1r}{1+\frac1r}$.
        \item %
         If $r<1$ and $N$ is large then $\fps(C_N)\sim r^N$.
    \end{enumerate}
    
    \item (4-regular graphs)
    \begin{enumerate}
        \item For every $r\ge 1.1$ there exists a positive constant $c$ and large 4-regular graphs $B_N$ such that $\fps(B_N)>c$.
        \item For every $r\le 1.9$ there exists a positive constant $c>1$ and large 4-regular graphs $E_N$ such that $\fps(E_N)<1/c^N$.
    \end{enumerate}
    
    \end{enumerate}
\end{thm}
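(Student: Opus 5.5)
This theorem simply collects results already proved in this appendix, so the plan is to assemble them, checking that hypotheses and quantifiers line up.

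For part~1, on the cycle, I will invoke \cref{cor:cor-cycle-ss}, which comes from \cref{lem:cycle-ss} with $i=1$. A single mutant is a contiguous block, so the exact formula $\fps(C_N)=\frac{1-1/r}{1+1/r-r^{-(N-1)}-r^{-N}}$ applies for $r\ne1$, and $\frac{1}{2N-2}$ for $r=1$. Then:
\begin{itemize}
\item (a) is read off directly.
\item (b) For $r>1$ the terms $r^{-(N-1)}$ and $r^{-N}$ vanish as $N\to\infty$, which gives the limit $\frac{1-1/r}{1+1/r}$.
\item (c) For $r<1$, multiplying numerator and denominator by $r^N$ gives $r^N\cdot\frac{1-1/r}{r^N+r^{N-1}-r-1}$. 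The second factor tends to the positive constant $\frac{1/r-1}{1+r}$, so $\fps(C_N)\sim r^N$ up to a constant factor.
\end{itemize}

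For part~2(a), I will invoke \cref{lem:bead-constant-fix} for the bead graphs $B_N$. I need to be careful about what that lemma says. It is phrased for ``every 4-regular graph $B_N$'', but it is meant for the specific bead family: a chain of 5-vertex beads, each joined to its neighbours so that the whole graph is 4-regular.

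The argument has two parts. First, with probability at least $c_0=(1/28)^4$ the initial mutant conquers its own bead. Second, in the coarse-grained walk on clustered states, the probability of advancing a bead exceeds the probability of retreating by a factor bounded away from $1$. That factor is certified numerically through the dominated process $\sen'$ at $r=1.1$.

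For $r>1.1$ I would add a monotonicity remark: $p^+$ increases in $r$, and the absorption probabilities of the 32-state chain move in the mutants' favour as $r$ grows. So the ratio stays above $1.04$. \cref{lem:one-dimensional} applied to the bead-level walk then gives $\fps(B_N)\ge c_0(1-1/1.04)>0$, uniformly in $N$.

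For part~2(b), I will apply \cref{lem:expander_exp_small} with $D=4$, which covers $r\le 4-2.1=1.9$. \cref{thm:regular_expanders} guarantees that $(\alpha N,2.99)$ vertex expanders $E_N$ exist for all large $N$. On these graphs the backward bias $\gamma\ge (D-2.01)/r\ge 1.99/1.9=:f>1$ holds while fewer than $\alpha N$ vertices are mutants. This yields $\fps(E_N)\le \frac{f-1}{f^{\alpha N}-1}$, which is below $1/c^N$ for any fixed $1<c<f^{\alpha}$ once $N$ is large.

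The main obstacle is justifying the comparison in 2(b), because \cref{lem:one-dimensional} is stated for a genuine birth--death chain while $|M|$ is not Markov. I would handle this by coupling or a supermartingale argument. Using the lower bound on $\gamma$, the process $f^{-|M_t|}$, restricted to active steps, is a submartingale until $|M|$ reaches $\alpha N$. Optional stopping then bounds the probability of reaching $\alpha N$ before $0$ by $\frac{f^{-1}-1}{f^{-\alpha N}-1}$. The same domination argument justifies the bead-level walk in 2(a).
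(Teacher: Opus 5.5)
Your assembly is the paper's proof: the paper obtains the theorem simply by combining \cref{cor:cor-cycle-ss}, \cref{lem:bead-constant-fix} and \cref{lem:expander_exp_small}. Your treatment of part~1 is correct, as is your choice of parameters in part~2(b) ($D=4$, $f=1.99/1.9$, any $1<c<f^{\alpha}$). Your monotonicity-in-$r$ remark for the bead chain is asserted rather than proved. The paper's lemma leaves the same point implicit by computing only at $r=1.1$. Because the replacer dynamics is not obviously monotone in the mutant set, this would need an explicit check.

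The step you single out as the main obstacle in 2(b), however, is resolved with the wrong sign. Let $p,q$ be the conditional probabilities that an active step gains or loses a mutant, so $q/p=\gamma(M)\ge f>1$ while $|M|<\alpha N$, and put $m:=\lceil\alpha N\rceil$. Then $f^{-|M_t|}$ is indeed a submartingale, but that only uses $\gamma\ge 1/f$. Optional stopping then yields
\[
\mathbb{P}\bigl(|M| \text{ reaches } m \text{ before } 0\bigr)\le\frac{1-f^{-1}}{1-f^{-m}}=\frac{f^{-1}-1}{f^{-m}-1}.
\]
This is your expression, and it tends to the constant $1-1/f$, not to something exponentially small. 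It exceeds the bound $\frac{f-1}{f^{m}-1}$ you stated a paragraph earlier by a factor $f^{m-1}$.

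The right process is $f^{|M_t|}$. Since $\mathbb{E}[f^{|M_{t+1}|}\mid M_t]=f^{|M_t|}(pf+q/f)$, and $pf+q/f\le p+q$ iff $q\ge pf$, it is a supermartingale exactly when $\gamma\ge f$. Stop at the first time $|M|\in\{0,m\}$. The process is bounded up to then, this time is a.s.\ finite, and fixation must pass through $m$ because $|M|$ moves by $\pm1$. This gives $f\ge P\,f^{m}+(1-P)$, i.e.\ $P\le\frac{f-1}{f^{m}-1}$. With this correction your argument is more rigorous than the paper's, which applies \cref{lem:one-dimensional} directly to the non-Markovian count $|M|$.

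The same care is needed for your claim that this argument also covers 2(a), where the bias favours the mutants. There the correct object is $c^{-K}$ as a supermartingale, with $K$ the number of mutant beads at clustered times and $c=1.04$. It is a supermartingale because the forward/backward ratio at clustered states is at least $c$. Once the first bead is taken, it gives a winning probability of at least $(1-1/c)/(1-c^{-B})$, where $B$ is the number of beads.
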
 
\section{Comparison to baseline}
In this section, we compare the fixation probability of replacers with the well-mixed baseline in \Cref{eq:moran-baseline}. %

\begin{lemma}\label{lem:comparison_r_1}
 Let $G_N$ be any graph with $N\geq 3$ nodes. Then $\fpns(G_N)<\frac{1}{N}=\fpn(G_N)$.
\end{lemma}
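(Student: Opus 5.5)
We follow the coupling strategy sketched in the main text. Introduce an auxiliary process $\sen^*$ on $G_N$ in which every vertex initially carries a distinct label (type) $1,\dots,N$, all with reproductive rate $1$. In $\sen^*$, a reproducing individual replaces a uniformly random neighbour whose label differs from its own, if such a neighbour exists. The steps are as follows.

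\textbf{Step 1: the symmetry argument.} In $\sen^*$, labels play symmetric roles, and eventually one label fixes. We claim that, averaged over the starting vertex $v$ of a fixed tagged label, the probability that this label fixes is exactly $1/N$. Indeed, relabelling is a symmetry of the dynamics: the probability that the label starting at $v$ fixes, summed over $v$, equals $1$ because exactly one label fixes almost surely. Averaging over the uniform start position then gives $\frac1N\sum_v \Pr(\text{label at } v \text{ fixes})=1/N$.

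\textbf{Step 2: the comparison.} We need $\rho^{\sen}_{r=1}(G;v)\le \Pr^{\sen^*}(\text{label at }v\text{ fixes})$ for each $v$, with strict inequality for some $v$ when $N\ge3$. Let $M_t$ denote the set occupied by the invader's label in $\sen^*$, and $S_t$ its counterpart in $\sen$.

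The key observation is this. In any state where the invader occupies a set $S$, the invader's set gains a vertex in the next step only via an invader vertex reproducing. It then replaces a random neighbour outside $S$, and that rule is identical in both processes. Losses occur when a non-invader vertex adjacent to $S$ reproduces and picks a vertex of $S$.

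In $\sen$, a resident $u$ adjacent to $S$ picks among $N(u)\cap S$, so it always hits $S$. In $\sen^*$, $u$ picks among all neighbours with a different label, which includes vertices of $S$ and possibly other-labelled non-invaders. So the probability of hitting $S$ is $|N(u)\cap S|/|N_{\neq}(u)|\le 1$.

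Projecting $\sen^*$ to the set $M_t$ therefore gives a process whose gain transitions from $S$ coincide with those of $\sen$ from $S$. Its loss transitions, vertex by vertex, are dominated by those of $\sen$, and the extra mass becomes "no change for the invader". After conditioning on steps that change the invader set, the jump chain of $M_t$ from state $S$ thus has the same gain distribution and weakly smaller loss probabilities. The catch is that the projected $M_t$ is not Markov, since its transitions depend on the other labels. We handle this with a monotone coupling on sets: we construct $(S_t,M_t)$ with $S_t\subseteq M_t$ preserved. Choose the same reproducing vertex and use shared uniforms so that target choices agree where possible. The inclusion is preserved when $S_t\subseteq M_t$ by checking cases: invader reproduction from $S_t\subseteq M_t$, resident reproduction, and so on.

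\textbf{Main obstacle.} Establishing this monotone coupling under different current states, i.e.\ when $S_t\subsetneq M_t$, is the hardest step. An invader in $S_t$ whose neighbours lie in $M_t\setminus S_t$ may convert a vertex in $\sen$ while the same vertex in $\sen^*$ targets elsewhere. We would therefore try first to set up the coupling with attention to these cases, or else to bypass inclusion by a supermartingale argument, namely by showing that the function $\phi(S)=\Pr^{\sen}(\text{fix}\mid S)$ evaluated along $\sen^*$ is a submartingale. The step reduces to showing that losses are less likely in $\sen^*$, which holds since gain rates coincide and $\phi$ is monotone in $S$. Monotonicity of $\phi$ itself needs its own check, though only for single added vertices.

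\textbf{Strictness.} For $N\ge3$ some resident adjacent to the initial invader has a neighbour other than the invader, by connectivity. In $\sen^*$ this resident's first reproduction misses the invader with positive probability, giving a strict gap.
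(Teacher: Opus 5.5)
Your proposal is correct and follows the paper's own argument: the same auxiliary replacer process with $N$ distinct labels, identical gain transitions and dominated loss transitions, the symmetry argument giving exactly $1/N$, and strictness from a starting vertex that has a resident neighbour of degree at least $2$. You are more careful than the paper about the projected invader set not being Markov, and the case you flag as the main obstacle is harmless: a vertex of $M_t\setminus S_t$ converted in $\sen$ already lies in $M_t$, which cannot shrink when an invader reproduces. Hence the nested coupling preserves $S_t\subseteq M_t$; it uses the same reproducing vertex and matches targets whenever the $\sen$ target of an invader lies outside $M_t$, or the $\sen^*$ target of a non-invader lies in $S_t$.
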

\begin{myproof}
Let us consider the process in the smart scenario starting from a mutant vertex $v$.
Let us consider a second process in which all the nodes have a different color and all follow the smart strategy. Then it is clear that starting from $v$ in the first case has a fixation probability not greater than in the second process. If we fix a subset of vertices that are mutants in some given state, the probability of gaining a new mutant is the same in both cases. However, losing a mutant is always greater than or equally probable in the first case than in the second case (because a resident may replace a neighboring resident of a different color). Hence, the first process has to have a fixation probability less than or equal to the second one. 

This holds for every starting position $v$, therefore it has to hold also when we average over all starting positions. But because the second process is symmetric, the fixation probability starting from a uniformly random vertex $v$ is $\frac{1}{N}$, which is the same as $\fpn(G_N).$ And hence, $\fpns(G_N)\leq\frac{1}{N}=\fpn(G_N)$. Moreover, for $N\geq3$ the inequality is actually strict because in at least one of the cases we average over, the starting state contains an active resident neighboring with a resident of a different color.
\end{myproof}

\begin{corollary}\label{cor:fixation_near_1}
    Let $G_N$ be any graph with $N \geq 3$ nodes. Then there exist $\varepsilon > 0$ such that for all $r\in (1-\varepsilon, 1+\varepsilon)$ we have $\fps(G_N) < \frac{1}{N}$.
\end{corollary}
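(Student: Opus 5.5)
The plan is to combine the strict inequality of \cref{lem:comparison_r_1} with continuity of $\fps(G_N)$ in $r$ for a fixed graph $G_N$.

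\textbf{Continuity.} For fixed $G_N$ the process $\sen$ is a Markov chain on the finite state space of subsets $S\subseteq V$. Its transition probabilities are rational functions of $r$ of the form $\frac{r\cdot(\cdot)}{F(S)}$ or $\frac{1}{F(S)}$ with $F(S)=N-|S|+r|S|>0$, so they are continuous (indeed smooth) on $r>0$. The fixation probabilities $\rho_r^{\sen}(G_N;S)$ are the unique solution of the linear system
\[
h(\varnothing)=0,\qquad h(V)=1,\qquad h(S)=\sum_{S'}P_r(S,S')\,h(S') \text{ for transient } S.
\]
Uniqueness holds because $G_N$ is connected: from every transient state both absorbing states are reachable with positive probability, for every $r>0$. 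Hence the matrix $I-Q_r$, where $Q_r$ is the restriction of the transition matrix to transient states, is invertible for each $r>0$. Cramer's rule then expresses each $\rho_r^{\sen}(G_N;S)$ as a ratio of polynomials in the entries of $P_r$, with nonvanishing denominator $\det(I-Q_r)$. So $r\mapsto \rho_r^{\sen}(G_N;S)$ is continuous on $(0,\infty)$, and so is the average $\fps(G_N)=\frac1N\sum_v \rho_r^{\sen}(G_N;v)$.

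\textbf{Conclusion.} By \cref{lem:comparison_r_1}, at $r=1$ we have $\fpns(G_N)<\frac1N$ for $N\ge 3$. Set $\delta=\frac1N-\fpns(G_N)>0$. By continuity there is $\varepsilon>0$, which we may take below $1$ so that $r>0$, such that $|\fps(G_N)-\fpns(G_N)|<\delta$ for all $r\in(1-\varepsilon,1+\varepsilon)$. This gives $\fps(G_N)<\frac1N$ on that interval.

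If we also want $\fps(G_N)<\fp(G_N)$, as claimed in the main text, the same argument applies to the difference $\fp(G_N)-\fps(G_N)$. The Moran fixation probability $\fp(G_N)$ is continuous in $r$ for the same linear-algebra reason and equals $\frac1N$ at $r=1$, so the difference is continuous and strictly positive at $r=1$. We may shrink $\varepsilon$ so that both inequalities hold.

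The only step needing care is justifying invertibility of $I-Q_r$ uniformly near $r=1$, which is what makes continuity valid. It follows from absorption being certain for every $r>0$, which in turn uses connectivity of $G_N$. No quantitative bound on $\varepsilon$ is needed.
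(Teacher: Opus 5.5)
Your proposal is correct and follows the paper's route: the strict inequality $\fpns(G_N)<\frac1N$ from \cref{lem:comparison_r_1}, combined with continuity of $\fps(G_N)$ in $r$, gives the result. The paper only asserts that continuity in the main text, so your argument through invertibility of $I-Q_r$ and Cramer's rule supplies a detail the paper leaves implicit.
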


\begin{lemma}\label{lem:path_comparison}
    Let $G_N$ be any graph with $N\ge 2$ nodes and $v$ a node of $G_N$. Let $r \geq 2$. Then
    \[
        \fps(G_N, v) \le \frac{1-\frac{1}{r}}{1-\frac{1}{r^N}}\,.
    \]
    Moreover, the equality occurs if and only if $G_N$ is the path graph and $v$ is one of its endpoints.
\end{lemma}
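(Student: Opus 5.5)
We argue by strong induction on $N$, and within fixed $N$ by a first-step analysis. First we verify the baseline case. On the path $P_N$ with the mutant at an endpoint, the mutants always form a prefix block. Exactly one mutant and one resident are active, so $\gamma_k=1/r$ for every $k$. Then \cref{lem:one-dimensional} gives exactly $\frac{1-1/r}{1-1/r^N}$, and this holds for every $r>0$.

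Write $\phi_n:=\frac{1-1/r}{1-1/r^n}$. It will be convenient to prove the stronger statement for every state $S$: $\fps(G;S)\le \phi_{N-|S|+1}$-type bounds. More precisely, we claim that from any mutant set $S$ the fixation probability is at most the path value with $|S|$ mutants forming the initial prefix, namely $f_k:=\frac{1-r^{-k}}{1-r^{-N}}$ with $k=|S|$. We prove this by induction on the number of residents, or by a potential/supermartingale argument, in three steps.

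Step one is to show that $f_{|S_t|}$ is a supermartingale for the replacer process on any graph. The one-step drift of $f_{|S|}$ is proportional to $\AM(S)\,r\,(f_{k+1}-f_k)-\AR(S)\,(f_k-f_{k-1})$. Since $f_{k+1}-f_k=r^{-1}(f_k-f_{k-1})$, this equals $(f_k-f_{k-1})(\AM-\AR)$. The drift is therefore nonpositive exactly when $\AR(S)\ge\AM(S)$, but that need not hold (a star center mutant has $\AM=1$ and large $\AR$, fine; a big mutant blob next to one resident has $\AR=1<\AM$). So a pure count potential fails, and we must run the process in phases, as the main text suggests.

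Step two handles the phases. We stop at the first time $\tau$ at which the mutant set has at least two active residents, i.e.\ $\AR\ge 2$. Before $\tau$ there is at most one active resident, and $\AR=1$ while mutants persist. Until $\tau$ the mutant set grows along a structure behaving like a path: with $\AR=1$ the unique resident boundary vertex $u$ fights against $\AM\ge1$ mutants. Here the claim is that as long as $\AM=1$ we are exactly in the path chain. At $\tau$ we bound the remaining fixation probability from above by an explicit comparison, discussed next.

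Step three, the main obstacle, is to show that at a state with $\AR\ge2$ and $k$ mutants, the fixation probability is at most $f_k$. Here we use $r\ge2$ together with induction. From such a state, a resident reproduces next with weight at least $2$ versus mutant weight $r\AM$. We plan a crude bound: pretend that after any mutant gain the mutants fix with probability $f_{k+1}$ (via induction on the number of residents, which applies to every state with more mutants). We also pretend that after a loss they have $f_{k-1}$ (induction downward, handled by proving the statement simultaneously for all states ordered by $|S|$ and using a fixed-point/maximum argument). The problematic case is $\AM>\AR$, where mutants gain faster than on the path. To control it we strengthen the hypothesis to compare with a path starting from a better position and exploit $r\ge2$. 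Our first attempt is the bound $\frac{r\AM f_{k+1}+\AR f_{k-1}}{r\AM+\AR}$ together with the observation that large $\AM$ requires many mutants adjacent to few residents. Each such resident, once it reproduces, destroys only one mutant, and we would hope to use $r\ge 2$ to show that the first hit costs more than subsequent gains. Strictness comes from the fact that any non-path graph, or an interior starting vertex, forces a state with $\AR\ge2$ reached with positive probability. At such a state the step-three inequality is strict: the loss $(f_k-f_{k-1})(\AR-1)>0$ is not compensated.
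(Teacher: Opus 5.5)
\paragraph{Gap.} Your plan rests on a strengthened invariant that is false. Neither ``$\fps(G;S)\le f_{|S|}$ for every mutant set $S$'' nor its Step~three version for states with $\AR\ge 2$ holds.

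Take $K_6$ and $r=2$ with four mutants, so $\AR=2$ and $\AM=4$. By \cref{lem:ss_exp_small} the fixation probability is $\Pr(X\le 3)$ with $X\sim\operatorname{Binomial}(5,\frac13)$, which equals $\frac{232}{243}\approx 0.955$. But $f_4=\frac{1-2^{-4}}{1-2^{-6}}=\frac{20}{21}\approx 0.952$. With $\AR=1$, already $K_5$ with four mutants gives $\frac{80}{81}>\frac{30}{31}=f_4$.

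This is exactly the $\AM>\AR$ regime you flag as problematic, and your own Step~one drift computation signals it. It is not merely hard there; the claim is wrong, so no induction on $|S|$ or fixed-point argument can rescue it. Step~three is also stated only as a hope, and your strictness argument depends on it.

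\paragraph{Missing idea.} Your stopping time $\tau$ is the right device. What you are missing is that you never need to control general states.

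Starting from a single mutant, every state before $\tau$ is a prefix of a pendant path with $\AM=\AR=1$, and at $\tau$ you still have $\AM=1$. For $r\ge 2$, the penalty at that one frontier ($\AR\ge 2$ against $\AM=1$) already pushes the probability of ever getting past it below $1-\frac1r$. Everything afterwards can simply be bounded by $1$. This is the paper's argument, in two cases.

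\emph{Case $\deg v\ge 2$.} Then $\tau=0$, and a loss in the first active step means extinction. Hence $\fps(G_N;v)\le\frac{r}{r+\deg v}\le\frac{r}{r+2}\le 1-\frac1r$ for $r\ge 2$.

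\emph{Case $\deg v=1$ and $G_N$ not a path.} Follow the pendant path $v=v_1,\dots,v_k=u$ to the first vertex $u$ of degree $d\ge 3$. Fixation requires the mutants to occupy, at some point, a neighbour of $u$ off this path. That event is a gambler's ruin with $\gamma_1=\dots=\gamma_{k-1}=\frac1r$ and $\gamma_k=\frac{d-1}{r}$. Its probability is therefore $\bigl(1+r^{-1}+\dots+r^{-(k-1)}+(d-1)r^{-k}\bigr)^{-1}\le 1-\frac1r$, because $(d-1)r^{-k}\ge 2r^{-k}\ge\sum_{i\ge k}r^{-i}$ when $r\ge 2$. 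The naive bound ``the next active step at $\tau$ must be a gain'' would be invalid here, since $u$ can be lost and later regained; the gambler's ruin accounts for that.

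In both cases $1-\frac1r<\frac{1-1/r}{1-r^{-N}}$. This gives the inequality and its strictness, and together with your endpoint computation it gives the characterization of equality.
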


\begin{myproof}
    Suppose that the vertex $v$ is of degree $d\geq 2$. Then the probability of dying in the first step is $\frac{d}{d+r} \geq \frac{2}{2+r}$. Note that for $r\geq2$ is $\frac{2}{2+r} \geq \frac{1}{r}$. Therefore $\fps(G_n;v) \leq 1 - \frac{2}{2+r} \leq 1-\frac{1}{r}$

    Suppose that $G_N$ is a graph that is not a path and contains vertices of degree $1$. Then there exist a vertex of degree at least $3$. Suppose that $v$ is a vertex of degree $1$. Then $v$ is at the end of some sub-path and we may follow this sub-path until we reach a vertex of degree at least $3$. Let us call this vertex $u$ and the sub-path between them as $P_k$. Let $A$ be the set of neighbors of $u$ that are in $G_n$ but not in $P_k$. Then the probability that mutants will eventually get a vertex in $A$ is equal to 
    \[\frac{1}{1+\frac{1}{r} + \cdots + \frac{1}{r^{k-1}} + (d-1)\cdot \frac{1}{r^k}}.\]
    Note that $(d-1) \geq 2$ and for $r\geq2$ we have $\frac1{r^k}\geq\sum_{i=k+1}^\infty \frac1{r^{i}}$, hence
    \[\frac{1}{1+\frac{1}{r} + \cdots + \frac{1}{r^{k-1}} + (d-1)\cdot \frac{1}{r^k}} \leq \frac{1}{\sum_{i=0}^\infty \frac1{r^{i}}}= 1-\frac{1}{r}.\]
    
    Therefore for any starting vertex $v$, except for the end of a path, the fixation probability is bounded $\fps(G_n;v) \leq 1- \frac{1}{r}$ and we have \[\fps(G_n) \leq 1 - \frac{1}{r} < \frac{1-\frac1r}{1-\frac1{r^N}}.\]

    Finally, consider that $G_N$ is a path on $N$ vertices and $v$ is the end of the path, then throughout the process there will be always one active mutant and one active until the process ends. Therefore the backward bias will be $\gamma_k = 1/r$ for any $1\leq k \leq N-1$. By \cref{lem:one-dimensional} the fixation probability is equal to 
    \[\fps(G_N; v) = \frac{1-\frac1r}{1-\frac1{r^N}}.
    \]
\end{myproof}

Combing \cref{lem:comparison_r_1} and \cref{lem:path_comparison} we get the following theorem. 

\begin{thm}\label{thm:arbitrary}
\leavevmode
\begin{enumerate}
    \item ($r= 1$)
    Let $G_N$ be any graph with $N\geq 3$ nodes. Then $\fpns(G_N)<\frac{1}{N}=\fpn(G_N)$.
    \item ($r\ge 2$) Let $G_N$ be any graph with $N\ge 2$ nodes and $v$ a node of $G_N$. Let $r \geq 2$. Then
    \[
        \fps(G_N, v) \le \frac{1-\frac{1}{r}}{1-\frac{1}{r^N}}\,.
    \]
    Moreover, the equality occurs if and only if $G_N$ is the path graph and $v$ is one of its endpoints.
    \end{enumerate}
\end{thm}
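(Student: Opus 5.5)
The statement is the appendix restatement of \cref{thm:arbitrary_main}, so the plan is to assemble it from the two preceding lemmas: part~1 is exactly \cref{lem:comparison_r_1}, and part~2 is exactly \cref{lem:path_comparison}. The proof therefore amounts to checking that each lemma's hypotheses match the corresponding item. I also want to be sure each lemma's own argument holds, so I note how I would verify both.

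For part~1 I would use the coupling in \cref{lem:comparison_r_1}. First, compare the homogeneous replacer population with a process in which every initial individual has its own colour and all are replacers with $r=1$. In the multicolour process, the lineage of any single initial individual fixes with probability equal to the stationary share. By symmetry among the $N$ colours at neutrality, this share is $1/N$ when averaged over starting vertices. Second, show monotonicity. For a fixed mutant set $S$ the gain rate is identical in both processes, since mutants see only opposite-type neighbours. The loss rate in the homogeneous process is at least as large, because a resident there always targets the mutant. In the multicolour process, by contrast, it may spend its turn on a differently coloured resident.

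This rate comparison drives the main difficulty, which is turning it into an inequality of absorption probabilities. The set of mutants is not a birth-death chain on $|S|$, so I would couple trajectories of $S$ directly. The step I would try first is conditioning on the embedded jump chain of the mutant set: at each state the probability that the next change is a gain is no larger in the homogeneous process, and a monotone coupling of these jump chains gives the inequality. Strictness for $N\ge3$ comes from a starting vertex $v$ at which some resident adjacent to $v$ also has another neighbour. Such a vertex exists because $G$ is connected and $N\ge3$. At that vertex the first-step loss probability is strictly larger in the homogeneous process.

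For part~2 I follow \cref{lem:path_comparison} and split by the start vertex. If $\deg v\ge2$, then the first step kills the mutant with probability $d/(d+r)\ge 2/(2+r)\ge 1/r$ for $r\ge2$, which gives $\fps\le 1-1/r$, strictly below the baseline. If $\deg v=1$ and $G$ is not a path, I walk along the pendant path to the first vertex $u$ of degree $d\ge3$. Up to that point the dynamics is a birth-death chain with biases $1/r$. At $u$ the resident side gains at least $d-1\ge2$ active residents, so the bias is at least $2/r$. Lemma~\ref{lem:one-dimensional} then bounds the probability of passing $u$ by $1/(1+\dots+r^{-(k-1)}+2r^{-k})\le 1-1/r$, using $r^{-k}\ge\sum_{i>k}r^{-i}$ for $r\ge2$. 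Strictness against the baseline follows from $1-1/r<(1-1/r)/(1-r^{-N})$. Finally, on a path started from an endpoint there is always exactly one active mutant and one active resident, so $\gamma_k=1/r$ throughout, and Lemma~\ref{lem:one-dimensional} gives equality.
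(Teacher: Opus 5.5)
Your decomposition is the paper's. Part~1 goes through the multicolour replacer process. Part~2 splits into a start of degree at least $2$, a leaf on a pendant path ending at a vertex of degree $d\ge3$, and an endpoint of a path. Part~2 is fine as written.

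The gap is the step you yourself single out in part~1, and your proposed fix would not work as stated. You compare the two processes only at \emph{equal} mutant sets (``the probability that the next change is a gain is no larger'') and then couple embedded jump chains. This fails for three reasons:
\begin{enumerate}
\item The mutant set of the multicolour process is not Markov on its own, because its loss rates depend on the other colours.
\item A same-state, aggregate comparison says nothing once the two trajectories separate.
\item Passing to jump chains only breaks the step-by-step alignment, since the holding probabilities differ.
\end{enumerate}
The paper's proof has the same hole: it goes from the same-state comparison to the inequality with a bare ``hence''.

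What works is a direct coupling of the original chains that preserves $S_H\subseteq S_M$. Here $S_H$ is the mutant set of the homogeneous process and $S_M$ is the set carrying $v$'s colour. At $r=1$, select the same individual $y$ in both processes.
\begin{enumerate}
\item If $y\in S_H$, its multicolour targets $N(y)\setminus S_M$ form a subset of its homogeneous targets $N(y)\setminus S_H$. Draw the homogeneous target $w$, reuse it when $w\notin S_M$, and otherwise redraw uniformly from $N(y)\setminus S_M$.
\item If $y\in S_M\setminus S_H$, the homogeneous set can only shrink and the multicolour set can only grow.
\item If $y\notin S_M$, its homogeneous targets $N(y)\cap S_H$ form a subset of its differently coloured neighbours. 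Draw the multicolour target $x'$, reuse it when $x'\in S_H$, and otherwise redraw uniformly from $N(y)\cap S_H$.
\end{enumerate}
In each case the marginals are correct and the inclusion is preserved. Hence fixation in the homogeneous process forces fixation of $v$'s colour.

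Strictness then follows from one positive-probability first step. A neighbour $u$ of $v$ with $\deg u\ge2$ reproduces, kills $v$ in the homogeneous process, but replaces a different vertex in the multicolour one. From there the multicolour process still fixes $v$'s colour with positive probability.

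Finally, the value $1/N$ does not come from a ``stationary share''. There is no stationary distribution, and $|S|$ is not a martingale for replacers. The value comes from the fixation events of the $N$ colours being disjoint and exhaustive, so the $N$ single-vertex fixation probabilities sum to~$1$.
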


\begin{rem}
It is not true that the fixation probability in $\mor$ is always better than the fixation probability in $\sen$. First, we note that there exist graphs that are better in $\sen$ than in $\mor$ in the setting $r<1$ (for example star on 4 vertices). Also, it is not true that for $r>1$ the $\sen$ is always worse from a particular starting position. As a counterexample might serve a path on 3 vertices and the middle vertex as the starting position. We then get this random walk:

\begin{figure}[!hbt]
\centering
\includegraphics[scale=0.8]{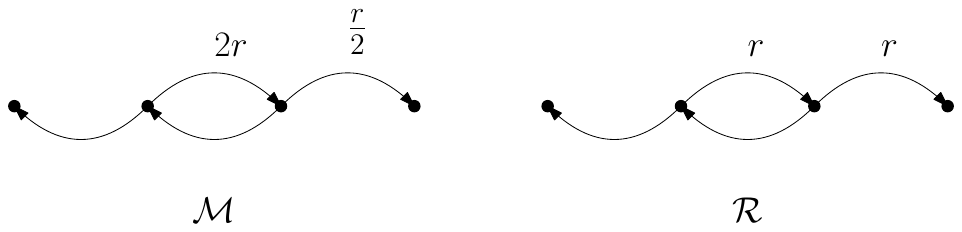}
\end{figure}

The first step is independent of the process. After the first step, mutants either die out or occupy the middle and one end of the path. This state is more beneficial for SS mutants. We can compute the extinction probability starting from this state to get $\frac{1}{1+\frac{r}{2}+r^2}$ for $\mor$ and $\frac{1}{1+r+r^2}$ for $\sen$. Hence, the fixation probability is greater for $\sen$ starting in the middle vertex for any $r>0.$
\end{rem}

\section{Graphs on $7$ vertices}

This section contains extra figure containing numerically computed fixation probability on all connected graphs on $7$ vertices for $r \in \{0.9, 2.0\}$.

\begin{figure}[h!]
    \centering
    \includegraphics[width=0.49\linewidth]{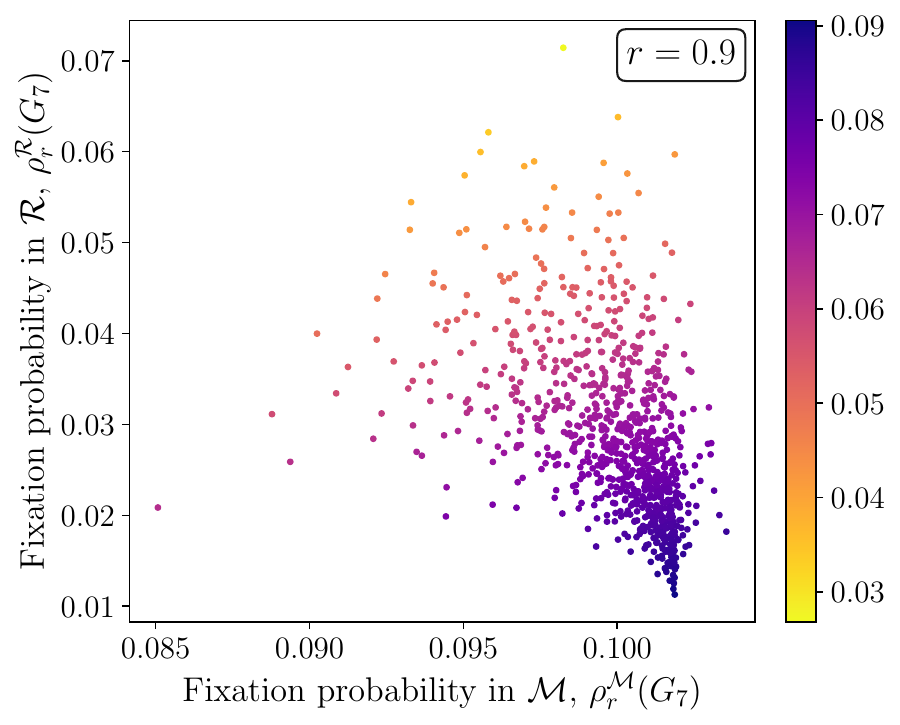}
    \includegraphics[width=0.49\linewidth]{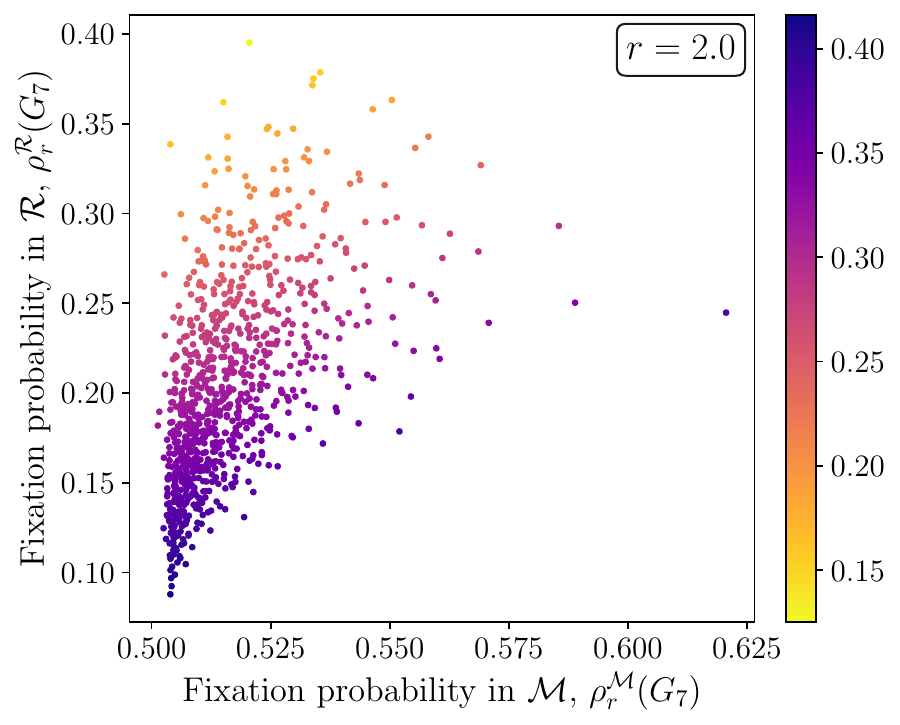}
    \label{fig:placeholder}
\end{figure}

\end{document}